\newcommand\ie{{\em i.e.}~}
\newcommand\eg{{\em e.g.}~}
\def\B{\mathscr B}
\def\C{\mathbb C}
\def\D{\mathscr D}
\def\E{\mathcal E}
\def\F{\mathscr F}
\def\H{\mathcal H}
\def\M{\mathcal M}
\def\N{\mathbb N}
\def\R{\mathbb R}
\def\S{\mathscr S}
\def\V{\mathsf V}
\def\12{{\textstyle\frac12}}
\def\<{\left\langle}
\def\>{\right\rangle}
\def\({\left(}
\def\){\right)}
\def\[{\left[}
\def\]{\right]}
\def\dom{\mathcal D}
\def\lone{\mathsf{L}^{\:\!\!1}}
\def\ltwo{\mathsf{L}^{\:\!\!2}}
\def\one{\mathop{1\mskip-4mu{\rm l}}\nolimits}
\def\e{\mathop{\mathrm{e}}\nolimits}
\def\d{\mathrm{d}}
\def\sgn{\mathop{\mathrm{sgn}}\nolimits}
\newtheorem{Theorem}{Theorem}[section]
\newtheorem{Remark}[Theorem]{Remark}
\newtheorem{Lemma}[Theorem]{Lemma}
\newtheorem{Assumption}[Theorem]{Assumption}
\newtheorem{Definition}[Theorem]{Definition}
\newtheorem{Example}[Theorem]{Example}
\begin{document}


\title{{\Large\textbf{Anisotropic Lavine's formula and symmetrised time delay
\\in scattering theory}}}

\author{Rafael Tiedra de Aldecoa}
\date{\small
\begin{quote}
\emph{
\begin{itemize}
\item[] CNRS (UMR 8088) and Department of Mathematics, University of Cergy-Pontoise,
2 avenue Adolphe Chauvin, 95302 Cergy-Pontoise Cedex, France
\item[] \emph{E-mail:} rafael.tiedra@u-cergy.fr
\end{itemize}
}
\end{quote}
}

\maketitle


\begin{abstract}
We consider, in quantum scattering theory, symmetrised time delay defined in terms of sojourn times
in arbitrary spatial regions symmetric with respect to the origin. For potentials decaying more
rapidly than $|x|^{-4}$ at infinity, we show the existence of symmetrised time delay, and prove
that it satisfies an anisotropic version of Lavine's formula. The importance of an anisotropic
dilations-type operator is revealed in our study.
\end{abstract}

\section{Introduction and main results}\label{Intro}

It is known for long that the definition of {\em time delay} (in terms of sojourn times) in
scattering theory has to be {\em symmetrised} in the case of multichannel-type scattering processes
(see \eg \cite{AJ06,Bolle/Osborn,Martin75,Martin81,Smith60,Tiedra06}). More recently \cite{GT07} it
has been shown that symmetrised time delay does exist, in two-body scattering processes, for
arbitrary dilated spatial regions symmetric with respect to the origin (usual time delay does exist
only for spherical spatial regions \cite{Sassoli/Martin}). This leads to a generalised formula for
time delay, which reduces to the usual one in the case of spherical spatial regions. The aim of the
present paper is to provide a reasonable interpretation of this formula for potential scattering by
proving its identity with an anisotropic version of Lavine's formula \cite{Lav74}.

Let us recall the definition of symmetrised time delay for a two-body scattering process in $\R^d$,
$d\ge1$. Consider a bounded open set $\Sigma$ in $\R^d$ containing the origin and the dilated
spatial regions $\Sigma_r:=\{rx\mid x\in\Sigma\}$, $r>0$. Let $H_0:=-\12\Delta$ be the kinetic
energy operator in $\H:=\ltwo(\R^d)$ (endowed with the norm $\|\cdot\|$ and scalar product
$\<\cdot,\cdot\>$). Let $H$ be a selfadjoint perturbation of $H_0$ such that the wave operators $W_\pm:=\textrm{s-}\lim_{t\to\pm\infty}\e^{itH}\e^{-itH_0}$ exist and are complete (so that the
scattering operator $S:=W_+^*W_-$ is unitary). Then one defines for some states $\varphi\in\H$ and
$r>0$ two sojourn times, namely:
\begin{equation*}
T^0_r(\varphi)
:=\int_{-\infty}^\infty\d t\int_{x\in\Sigma_r}\d^dx\left|(\e^{-itH_0}\varphi)(x)\right|^2
\end{equation*}
and
\begin{equation*}
T_r(\varphi)
:=\int_{-\infty}^\infty\d t\int_{x\in\Sigma_r}\d^dx\left|(\e^{-itH}W_-\varphi)(x)\right|^2.
\end{equation*}
If the state $\varphi$ is normalized to one the first number is interpreted as the time spent by
the freely evolving state $\e^{-itH_0}\varphi$ inside the set $\Sigma_r$, whereas the second one
is interpreted as the time spent by the associated scattering state $\e^{-itH}W_-\varphi$ within
the same region. The usual time delay of the scattering process with incoming state $\varphi$ for
$\Sigma_r$ is defined as
$$
\tau_r^{\rm in}(\varphi):=T_r(\varphi)-T^0_r(\varphi),
$$
and the corresponding symmetrised time delay for $\Sigma_r$ is given by
$$
\tau_r(\varphi):=T_r(\varphi)-\12\[T^0_r(\varphi)+T^0_r(S\varphi)\].
$$
If $\Sigma$ is spherical and some abstract assumptions are verified, the limits of
$\tau_r^{\rm in}(\varphi)$ and $\tau_r(\varphi)$ as $r\to\infty$ exist and satisfy
\cite[Sec. 4.3]{GT07}
\begin{equation}\label{sweetie}
\lim_{r\to\infty}\tau_r(\varphi)
=\lim_{r\to\infty}\tau_r^{\rm in}(\varphi)
=-\12\big\langle H_0^{-1/2}\varphi,S^*[D,S]H_0^{-1/2}\varphi\big\rangle,
\end{equation}
where $D$ is the generator of dilations. If $\Sigma$ is not spherical the limit of
$\tau_r^{\rm in}(\varphi)$ as $r\to\infty$ does not exist anymore \cite{Sassoli/Martin}, but the
limit of $\tau_r(\varphi)$ as $r\to\infty$ does still exist, as soon as $\Sigma$ is symmetric with
respect to the origin \cite[Rem. 4.8]{GT07}.

In this paper we study $\tau_r(\varphi)$ in the setting of potential scattering. For potentials
decaying more rapidly than $|x|^{-4}$ at infinity, we prove the existence of
$\lim_{r\to\infty}\tau_r(\varphi)$ by using the results of \cite{GT07}. In a first step we show
that the limit satisfies the equality
\begin{equation}\label{tuptup}
\lim_{r\to\infty}\tau_r(\varphi)
=-\big\langle f(H_0)^{-1/2}\varphi,S^*[D_\Sigma,S]f(H_0)^{-1/2}\varphi\big\rangle,
\end{equation}
where $f$ is a real symbol of degree $1$ and $D_\Sigma\equiv D_\Sigma(f)$ is an operator acting as
an anisotropic generator of dilations. Then we prove that Formula \eqref{tuptup} can be rewritten
as an anisotropic Lavine's formula. Namely, one has (see Theorem \ref{lavine} for a precise
statement)
\begin{equation}\label{pizza_1}
\lim_{r\to\infty}\tau_r(\varphi)
=\int_{-\infty}^\infty\d s\,\big\langle \e^{-isH}W_-f(H_0)^{-1/2}\varphi,
\V_{\Sigma,f}\e^{-isH}W_-f(H_0)^{-1/2}\varphi\big\rangle,
\end{equation}
where the operator
$$
\V_{\Sigma,f}=f(H)-f(H_0)-i[V,D_\Sigma]
$$
generalises the virial $\widetilde V:=2V-i[V,D]$. Formula \eqref{pizza_1} provides an interesting
relation between the potential $V$ and symmetrised time delay, which we discuss.

Let us give a description of this paper. In section \ref{dilations} we introduce the condition on
the set $\Sigma$ (see Assumption \ref{Sigma}) under which our results are proved. We also define
the anisotropic generator of dilations $D_\Sigma$ and establish some of its properties. Section
\ref{potential_scattering} is devoted to symmetrised time delay in potential scattering; the
existence of symmetrised time delay for potentials decaying more rapidly than $|x|^{-4}$ at
infinity is established in Theorem \ref{time_delay}. In Theorem \ref{lavine} of Section
\ref{section_lavine} we prove the anisotropic Lavine's formula \eqref{pizza_1} for the same class
of potentials. Remarks and examples are to be found at the end of Section \ref{section_lavine}.

We emphasize that the extension of Lavine's formula to non spherical sets $\Sigma$ is not
straightforward due, among other things, to the appearance of a singularity in the space of
momenta not present in the isotropic case (see Equation \eqref{partial_alpha} and the paragraphs
that follow). The adjunction of the symbol $f$ in the definition of the operator $D_\Sigma$ (see
Definition \ref{F_Sigma}) is made to circumvent the difficulty.

Finally we refer to \cite{Jen84} (see also \cite{Jen83,Lav74,Nak87,Nar80,Nar84}) for a related
work on Lavine's formula for time delay.

\section{Anisotropic dilations}\label{dilations}

In this section we define the operator $D_\Sigma$ and establish some of its properties in relation
with the generator of dilations $D$ and the shape of $\Sigma$. We start by recalling some
notations.

Given two Hilbert spaces $\H_1$ and $\H_2$, we write $\B(\H_1,\H_2)$ for the set of bounded
operators from $\H_1$ to $\H_2$ with norm $\|\cdot\|_{\H_1\to\H_2}$, and put
$\B(\H_1):=\B(\H_1,\H_1)$. We set $Q:=(Q_1,Q_2,\ldots,Q_d)$ and $P:=(P_1,P_2,\ldots,P_d)$, where
$Q_j$ (resp. $P_j$) stands for the $j$-th component of the position (resp. momentum) operator in
$\H$. $\N:=\{0,1,2,\ldots\}$ is the set of natural numbers. $\H^k$, $k\in\N$, are the usual Sobolev
spaces over $\R^d$, and $\H^s_t(\R^d)$, $s,t\in\R$, are the weighted Sobolev spaces over $\R^d$
\cite[Sec. 4.1]{ABG}, with the convention that $\H^s(\R^d):=\H^s_0(\R^d)$ and
$\H_t(\R^d):=\H^0_t(\R^d)$. Given a set $\M\subset\R^d$ we write $\one_\M$ for the characteristic
function for $\M$. We always assume that $\Sigma$ is a bounded open set in $\R^d$ containing $0$,
with boundary $\partial\Sigma$ of class $C^4$. Often we even suppose that $\Sigma$ satisfies the
following stronger assumption (see \cite[Sec. 2]{GT07}).

\begin{Assumption}\label{Sigma}
$\Sigma$ is a bounded open set in $\R^d$ containing $0$, with boundary $\partial\Sigma$ of class
$C^4$. Furthermore $\Sigma$ satifies
$$
\int_0^\infty\d\mu\[\one_\Sigma(\mu x)-\one_\Sigma(-\mu x)\]=0,\quad\forall x\in\R^d.
$$
\end{Assumption}
\noindent
If $p\in\R^d$, then the number $\int_0^\infty\d t\one_\Sigma(tp)$ is the sojourn time in $\Sigma$ of
a free classical particle moving along the trajectory $t\mapsto x(t):=tp$, $t\ge0$. Obviously
$\Sigma$ satisfies Assumption \ref{Sigma} if $\Sigma$ is symmetric with respect to $0$ (\ie
$\Sigma=-\Sigma$). Moreover if $\Sigma$ is star-shaped with respect to $0$ and satisfies
Assumption \ref{Sigma}, then $\Sigma=-\Sigma$.

We recall from \cite[Lemma 2.2]{GT07} that the limit
\begin{equation}\label{R_Sigma}
R_\Sigma(x):=\lim_{\varepsilon\searrow0}\(\int_\varepsilon^{+\infty}\frac{\d\mu}\mu
\one_\Sigma(\mu x)+\ln\varepsilon\)
\end{equation}
exists for each $x\in\R^d\setminus\{0\}$, and we define the function
$G_\Sigma:\R^d\setminus\{0\}\to\R$ by
\begin{equation}\label{G_Sigma}
G_\Sigma(x):=\12\[R_\Sigma(x)+R_\Sigma(-x)\].
\end{equation}
The function $G_\Sigma:\R^d\setminus\{0\}\to\R$ is of class $C^4$ since $\partial\Sigma$ of class
$C^4$. Let $x\in\R^d\setminus\{0\}$ and $t>0$, then Formulas \eqref{R_Sigma} and \eqref{G_Sigma}
imply that
$$
G_\Sigma(tx)=G_\Sigma(x)-\ln(t).
$$
From this one easily gets the following identities for the derivatives of $G_\Sigma$:
\begin{align}
x\cdot(\nabla G_\Sigma)(x)&=-1,\label{moinsun}\\
t^{|\alpha|}\big(\partial^\alpha G_\Sigma\big)(tx)&=\big(\partial^\alpha G_\Sigma\big)(x),
\label{partial_alpha}
\end{align}
where $\alpha$ is a $d$-dimensional multi-index with $|\alpha|\ge1$ and $\partial^\alpha:=\partial_1^{\alpha_1}\cdots\partial_d^{\alpha_d}$. The second identity
suggests a way of regularizing the functions $\partial_jG_\Sigma$ which partly motivates the
following definition. We use the notation $S^\mu(\R;\R)$, $\mu\in\R$, for the vector space of
real symbols of degree $\mu$ on $\R$.

\begin{Definition}\label{F_Sigma}
Let $f\in S^1(\R;\R)$ be such that
\begin{enumerate}
\item[(i)] $f(0)=0$ and $f(u)>0$ for each $u>0$,
\item[(ii)] for each $j=1,2,\ldots,d$, the function $x\mapsto(\partial_jG_\Sigma)(x)f(x^2/2)$
(a priori only defined for $x\in\R^d\setminus\{0\}$) belongs to $C^3(\R^d;\R)$.
\end{enumerate}
Then we define $F_\Sigma:\R^d\to\R^d$ by $F_\Sigma(x):=-(\nabla G_\Sigma)(x)f(x^2/2)$.
\end{Definition}

Given a set $\Sigma$ there are many appropriate choices for the function $f$. For instance if
$\gamma>0$ one can always take $f(u)=2(u^2+\gamma)^{-1}u^3$, $u\in\R$. But when
$\Sigma$ is equal to the open unit ball $\mathcal B:=\{x\in\R^d\mid|x|<1\}$ one can obviously
make a simpler choice. Indeed in such case one has \cite[Rem. 2.3.(b)]{GT07}
$(\partial_jG_{\mathcal B})(x)=-x_jx^{-2}$, and the choice $f(u)=2u$, $u\in\R$, leads to the
$C^\infty$-function $F_\Sigma(x)=x$.

\begin{Remark}\label{orthogonal}
One can associate to each set $\Sigma$ a unique set $\widetilde\Sigma$ symmetric and star-shaped
with respect to $0$ such that $G_\Sigma=G_{\widetilde\Sigma}$ \cite[Rem. 2.3.(c)]{GT07}. The
boundary $\partial\widetilde\Sigma$ of $\widetilde\Sigma$ satisfies
$$
\partial\widetilde\Sigma:=\big\{\e^{G_\Sigma(x)}x\mid x\in\R^d\setminus\{0\}\big\},
$$
and $\widetilde\Sigma_r:=\big\{rx\mid x\in\widetilde\Sigma\big\}$, $r>0$. Thus the vector field
$F_\Sigma=F_{\widetilde\Sigma}$ is orthogonal to the hypersurfaces $\partial\widetilde\Sigma_r$
in the following sense: if $v$ belongs to the tangent space of $\partial\widetilde\Sigma_r$ at
$y\in\partial\widetilde\Sigma_r$, then $F_\Sigma(y)$ is orthogonal to $v$. To see this let
$s\mapsto y(s)\equiv r\e^{G_\Sigma(x(s))}x(s)$ be any differentiable curve on
$\partial\widetilde\Sigma_r$. Then $\frac\d{\d s}\;\!y(s)$ belongs to the tangent space of
$\partial\widetilde\Sigma_r$ at $y(s)$, and a direct calculation using Equations
\eqref{moinsun}-\eqref{partial_alpha} gives $F_\Sigma(y(s))\cdot\frac\d{\d s}\;\!y(s)=0$.
\end{Remark}

In the rest of the section we give a meaning to the expression
$$
D_\Sigma:=\12[F_\Sigma(P)\cdot Q+Q\cdot F_\Sigma(P)],
$$
and we establish some properties of $D_\Sigma$ in relation with the generator of dilations
$$
D:=\12(P\cdot Q+Q\cdot P).
$$
For the next lemma we emphasize that $\H^2$ is contained in the domain $\dom\big(f(H_0)\big)$
of $f(H_0)$. The notation $\<\:\!\cdot\:\!\>$ stands for $\sqrt{1+|\cdot|^2}$, and $\S$ is
the Schwartz space on $\R^d$.

\begin{Lemma}\label{D_Sigma}
Let $\Sigma$ be a bounded open set in $\R^d$ containing $0$, with boundary $\partial\Sigma$ of
class $C^4$. Then
\begin{enumerate}
\item[(a)] The operator $D_\Sigma$ is essentially selfadjoint on $\S$. As a bounded operator,
$D_\Sigma$ extends to an element of $\B\big(\H^s_t,\H^{s-1}_{t-1}\big)$ for each $s\in\R$,
$t\in[-2,0]\cup[1,3]$.
\item[(b)] One has for each $t\in\R$ and $\varphi\in\dom(D_\Sigma)\cap\dom\big(f(H_0)\big)$
\begin{equation}\label{group_com}
\e^{-itH_0}D_\Sigma\e^{itH_0}\varphi=[D_\Sigma-tf(H_0)]\varphi.
\end{equation}
In particular one has the equality
\begin{equation}\label{gen_com}
i[H_0,D_\Sigma]=f(H_0)
\end{equation}
as sesquilinear forms on $\dom(D_\Sigma)\cap\H^2$.
\end{enumerate}
\end{Lemma}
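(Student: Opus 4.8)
The plan is to pass to the momentum representation, where $H_0=\12P^2$, $f(H_0)=f(P^2/2)$ and $F_\Sigma(P)$ all act by multiplication while $Q=i\nabla_P$. There $D_\Sigma$ becomes the first order differential operator
\begin{equation*}
D_\Sigma=F_\Sigma(P)\cdot Q+\tfrac i2(\nabla\cdot F_\Sigma)(P)=i\,F_\Sigma(P)\cdot\nabla_P+\tfrac i2(\nabla\cdot F_\Sigma)(P),
\end{equation*}
the first equality following from $[Q_j,F_{\Sigma,j}(P)]=i(\partial_jF_{\Sigma,j})(P)$. The two facts driving everything are the symbol estimates on $F_\Sigma$: by \eqref{partial_alpha} each $\partial^\alpha G_\Sigma$ is homogeneous of degree $-|\alpha|$, so together with $f\in S^1(\R;\R)$ one checks that $F_{\Sigma,j}$ is a symbol of degree $1$ and that $\partial^\beta F_{\Sigma,j}$ is a symbol of degree $1-|\beta|$ for every $|\beta|\le3$, the upper bound $3$ being exactly the regularity granted by Definition \ref{F_Sigma}(ii).

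For the essential self-adjointness in (a) I would invoke Nelson's commutator theorem with comparison operator $N:=\<P\>^2+\<Q\>^2$. The operator $D_\Sigma$ is symmetric on $\S$ by construction, and since its coefficients are degree-$1$ symbols the standard symbol/commutator estimates give the $N$-boundedness $\|D_\Sigma\varphi\|\le c\,\|N\varphi\|$ and the form bound $|\<D_\Sigma\varphi,N\varphi\>-\<N\varphi,D_\Sigma\varphi\>|\le c\,\|N^{1/2}\varphi\|^2$ on $\S$; Nelson's theorem then yields essential self-adjointness on $\S$. For the weighted bounds I would show that $\<Q\>^{t-1}\<P\>^{s-1}D_\Sigma\<P\>^{-s}\<Q\>^{-t}$ extends to an element of $\B(\H)$. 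The powers of $\<P\>$ commute with $F_\Sigma(P)$ and with the $P$-part, so the index $s$ is handled for all $s\in\R$; the powers of $\<Q\>$ must instead be commuted through $F_\Sigma(P)$, each commutator producing one derivative of $F_\Sigma$. Carrying this out for the integer values $t=1,2,3$ uses at most three derivatives of $F_\Sigma$ and hence is licensed by the $C^3$ regularity; the range $t\in[1,3]$ then follows by complex interpolation, and the range $t\in[-2,0]$ by duality, since the adjoint of $D_\Sigma\colon\H^s_t\to\H^{s-1}_{t-1}$ is $D_\Sigma\colon\H^{1-s}_{1-t}\to\H^{-s}_{-t}$ and $1-t$ runs through $[-2,0]$ as $t$ runs through $[1,3]$.

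Part (b) is then a direct computation on $\S$. Since $[H_0,Q_j]=-iP_j$, the Heisenberg relation gives $\e^{-itH_0}Q_j\e^{itH_0}=Q_j-tP_j$, and because $F_\Sigma(P)$ and $H_0$ are both functions of $P$ they pass through $\e^{\pm itH_0}$ unchanged; hence
\begin{equation*}
\e^{-itH_0}D_\Sigma\e^{itH_0}=D_\Sigma-\tfrac t2\big[F_\Sigma(P)\cdot P+P\cdot F_\Sigma(P)\big]=D_\Sigma-t\,F_\Sigma(P)\cdot P.
\end{equation*}
The crucial simplification is that, by the definition of $F_\Sigma$ and the normalisation \eqref{moinsun},
\begin{equation*}
F_\Sigma(P)\cdot P=-f(P^2/2)\,P\cdot(\nabla G_\Sigma)(P)=f(P^2/2)=f(H_0),
\end{equation*}
which yields \eqref{group_com} on $\S$. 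To reach the stated domain I would extend by continuity, using that $\S$ is a core for $D_\Sigma$ by part (a), that $\e^{itH_0}\S=\S$, and that $\H^2\subset\dom(f(H_0))$. Finally, differentiating \eqref{group_com} at $t=0$ gives $-i[H_0,D_\Sigma]=-f(H_0)$, i.e. \eqref{gen_com}, read as an identity of sesquilinear forms on $\dom(D_\Sigma)\cap\H^2$.

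I expect the main obstacle to be the weighted-space bound in (a), and in particular matching the admissible range of $t$ to the available regularity: the bookkeeping of the commutators $[\<Q\>^t,F_\Sigma(P)]$ has to be organised so that no more than three derivatives of $F_\Sigma$ are ever needed, which is what confines $t$ to $[1,3]$ before duality. By contrast the algebra in (b) is short, the only delicate point being the domain extension, which is routine once the core property from (a) is in hand.
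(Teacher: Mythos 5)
Your argument is correct, and its skeleton coincides with the paper's own proof in all but one step. For the weighted bounds the paper proceeds exactly as you do: it proves endpoint estimates ($D_\Sigma\in\B\big(\H^s_3,\H^{s-1}_2\big)$ and $D_\Sigma\in\B\big(\H^s_1,\H^{s-1}\big)$ for all $s\in\R$, using the bound $\big\|(\partial^\alpha{F_\Sigma}_j)(P)\varphi\big\|\le{\rm Const.}\left\|\<P\>\varphi\right\|$ for $|\alpha|\le3$ and the fact that $\<Q\>^2$ acts as $1-\Delta$ in momentum space) and then concludes by interpolation and duality; part (b) is the same computation from $\e^{-itH_0}Q_j\e^{itH_0}\varphi=(Q_j-tP_j)\varphi$ and \eqref{moinsun}, extended off $\S$ by essential selfadjointness, with \eqref{gen_com} obtained by differentiating \eqref{group_com} in the form sense at $t=0$. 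The genuinely different step is essential selfadjointness itself: the paper invokes \cite[Prop.~7.6.3.(a)]{ABG}, whose mechanism is the flow $\xi_t$ of the vector field $-F_\Sigma$ and which yields the explicit group formula \eqref{InFourier} that the paper reuses in Remark \ref{about_the_group} and Lemma \ref{R_lambda}; you instead apply Nelson's commutator theorem with comparison operator $N=\<P\>^2+\<Q\>^2$. That route is sound — the $C^3$ regularity and the symbol bounds on $F_\Sigma$ do give both the $N$-bound and the commutator form bound — and it is more self-contained, but it produces no description of the group $\e^{itD_\Sigma}$, which the paper needs later anyway. Two minor points of hygiene: $\<P\>^{s-1}$ does not literally commute past the $Q$-factor of $D_\Sigma$, so even the index $s$ requires commutators (these only produce symbols of lower degree, so nothing breaks); and the domain extension in (b) is cleanest stated not as a continuity argument but as: both sides of \eqref{group_com} restricted to $\S$ are essentially selfadjoint with the same closure, and the right-hand side on $\dom(D_\Sigma)\cap\dom\big(f(H_0)\big)$ is a symmetric extension of this restriction, hence contained in that common closure.
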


The second claim of point (a) is sufficient for our purposes, even if it is only a particular
case of a more general result.

\begin{proof}
(a) The essential seladjointness of $D_\Sigma$ on $\S$ follows from the fact that $F_\Sigma$
is of class $C^3$ (see \eg \cite[Prop. 7.6.3.(a)]{ABG}).

Due to the hypotheses on $F_\Sigma$ one has for each $\varphi\in\S$ the bound
\begin{equation}\label{joliebound}
\big\|(\partial^\alpha{F_\Sigma}_j)(P)\varphi\|\le{\rm Const.}\left\|\<P\>\varphi\right\|,
\end{equation}
where ${F_\Sigma}_j$ is the $j$-th component of $F_\Sigma$ and $\alpha$ is a $d$-dimensional
multi-index with $|\alpha|\le3$. Furthermore
$$
\|D_\Sigma\|_{\H^s_3\to\H^{s-1}_2}
\le\sum_{j\le d}\sup_{\varphi\in\S,\|\varphi\|_{\H^s_3}=1}
\big\|\<P\>^{s-1}\<Q\>^2\[{F_\Sigma}_j(P)Q_j+\hbox{$\frac i2$}(\partial_j{F_\Sigma}_j)(P)\]
\varphi\big\|
$$
for each $s\in\R$. Since $\<Q\>^2$ acts as the operator $1-\Delta$ after a Fourier transform,
the inequalities above imply that $D_\Sigma$ extends to an element of $\B(\H^s_3,\H^{s-1}_2)$.
A similar argument shows that $D_\Sigma$ extends to an element of $\B(\H^s_1,\H^{s-1})$ for
each $s\in\R$. The second part of the claim follows then by using interpolation and duality.

(b) Let $\varphi\in\e^{-itH_0}\S$. Since $\e^{-itH_0}Q_j\e^{itH_0}\varphi=(Q_j-tP_j)\varphi$, it
follows by Formula \eqref{moinsun} that
$$
\e^{-itH_0}D_\Sigma\e^{itH_0}\varphi=[D_\Sigma+tP\cdot(\nabla G_\Sigma)(P)f(H_0)]\varphi
=[D_\Sigma-tf(H_0)]\varphi.
$$
This together with the essential selfajointness of $\e^{-itH_0}D_\Sigma\e^{itH_0}$ on
$\e^{-itH_0}\S$ implies the first part of the claim. Relation \eqref{gen_com} follows by
taking the derivative of \eqref{group_com} w.r.t. $t$ in the form sense and then posing $t=0$.
\end{proof}

\begin{Remark}\label{about_the_group}
If $\Sigma=\mathcal B$ and $f(u)=2u$, then $F_\Sigma(x)=x$ for each $x\in\R^d$, and the
operators $D_\Sigma$ and $D$ coincide. If $\Sigma$ is not spherical it is still possible to
determine part of the behaviour of the group $W_t:=\e^{itD_\Sigma}$. Indeed let
$\R\times\R^d\ni(t,x)\mapsto\xi_t(x)\in\R^d$
be the flow associated to the vector field $-F_\Sigma$, that is, the solution of the
differential equation
\begin{equation}\label{differential}
\frac\d{\d t}\,\xi_t(x)=(\nabla G_\Sigma)(\xi_t(x))f\big(\xi_t(x)^2/2\big),\quad\xi_0(x)=x.
\end{equation}
Then it is known (see \eg the proof of \cite[Prop. 7.6.3.(a)]{ABG}) that the group $W_t$ acts
in the Fourier space as
\begin{equation}\label{InFourier}
\big(\widehat W_t\varphi\big)(x):=\sqrt{\eta_t(x)}\varphi(\xi_t(x)),
\end{equation}
where $\eta_t(x)\equiv\det(\nabla\xi_t(x))$ is the Jacobian at $x$ of the mapping
$x\mapsto\xi_t(x)$. Taking the scalar product of Equation \eqref{differential} with $\xi_t(x)$
and then using Formula \eqref{moinsun} leads to the equation
$$
\frac\d{\d t}\,\xi_t(x)^2=-2f\big(\xi_t(x)^2/2\big),\quad\xi_0(x)=x.
$$
If $t<0$ and $x\ne0$, then $\xi_t(x)^2\ge x^2>0$, and $\xi_t(x)^2$ is given by the implicit
formula
$$
2t+\int_{x^2}^{\xi_t(x)^2}\d u\,f(u/2)^{-1}=0.
$$
This, together with the facts that $x\mapsto f(x^2/2)$ belongs to $S^2(\R;\R)$ and $f(u)>0$ for
$u>0$, implies the estimate $\<\xi_t(x)\>\le\e^{-\textsc ct}\<x\>$ for some constant
$\textsc c>0$. Since $\<\xi_t(x)\>\le\<x\>$ for each $t\ge0$ it follows that
\begin{equation}\label{upperbound}
\<\xi_t(x)\>\le(1+\e^{-\textsc ct})\<x\>
\end{equation}
for all $t\in\R$ and $x\in\R^d$ (the case $x=0$ is covered since $\xi_t(0)=0$ for all $t\in\R$).
Equation \eqref{upperbound} implies that the domain $\H^2$ of $H_0$ is left invariant by the
group $W_t$.
\end{Remark}

The results of Remarks \ref{orthogonal} and \ref{about_the_group} suggest that $W_t$ may be
interpreted as an anisotropic version of the dilations group, which reduces to the usual
dilations group in the case $\Sigma=\mathcal B$ and $f(u)=2u$.

In the next lemma we show some properties of the mollified resolvent
$$
R_\lambda:=i\lambda(D_\Sigma+i\lambda)^{-1},\quad\lambda\in\R\setminus\{0\}.
$$
We refer to \cite[Lemma 6.2]{PSS81} for the same results in the case of the usual dilations
generator $D$, that is, when $\Sigma=\mathcal B$ and $f(u)=2u$. See also \cite[Lemma 4.5]{CFKS}
for a general result.

\begin{Lemma}\label{R_lambda}
Let $\Sigma$ be a bounded open set in $\R^d$ containing $0$, with boundary $\partial\Sigma$
of class $C^4$. Then
\begin{enumerate}
\item[(a)] One has for each $t\in\R$ and $\varphi\in\dom\big(\xi_t(P)^2\big)$
\begin{equation}\label{D-homogeneous}
\e^{itD_\Sigma}H_0\e^{-itD_\Sigma}\varphi=\12\;\!\xi_t(P)^2\varphi.
\end{equation}

\item[(b)] For each $\lambda\in\R$ with $|\lambda|$ large enough, $R_\lambda$ belongs to
$\B(\H^2)$, and $R_\lambda$ extends to an element of $\B(\H^{-2})$. Furthermore we have for
each $\varphi\in\H^2$ and each $\psi\in\H^{-2}$
$$
\lim_{|\lambda|\to\infty}\|(1-R_\lambda)\varphi\|_{\H^2}=0\quad{\rm and}\quad
\lim_{|\lambda|\to\infty}\|(1-R_\lambda)\psi\|_{\H^{-2}}=0.
$$
\end{enumerate}
\end{Lemma}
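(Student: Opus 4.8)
\emph{Part (a).} The plan is to read the conjugation off the Fourier representation \eqref{InFourier} of the group $W_t=\e^{itD_\Sigma}$. In Fourier space $\e^{\pm itD_\Sigma}$ act by $\varphi\mapsto\sqrt{\eta_{\pm t}}\,(\varphi\circ\xi_{\pm t})$ and $g(P)$ by multiplication by $g$, so for $\widehat\varphi\in C^\infty_c(\R^d\setminus\{0\})$ the composition $\e^{itD_\Sigma}g(P)\e^{-itD_\Sigma}$ produces the factor $\sqrt{\eta_t(x)\,\eta_{-t}(\xi_t(x))}\,g(\xi_t(x))$ together with the argument $\xi_{-t}(\xi_t(x))$. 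I would then invoke the group law $\xi_{-t}\circ\xi_t=\mathrm{id}$ and the Jacobian identity $\eta_t(x)\,\eta_{-t}(\xi_t(x))=1$ (from differentiating $\xi_{-t}\circ\xi_t=\mathrm{id}$ and taking determinants): the $\eta$-factors cancel and the argument reduces to $x$, leaving multiplication by $g(\xi_t(x))$. Hence $\e^{itD_\Sigma}g(P)\e^{-itD_\Sigma}=g(\xi_t(P))$ on this core, and the choice $g(x)=\12\;\!x^2$ (so that $g(P)=H_0$) is exactly \eqref{D-homogeneous}. As both sides are closed operators agreeing on a core, the identity extends to $\dom(\xi_t(P)^2)$.

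\emph{Part (b), boundedness.} I would base this on the algebraic identity $1-R_\lambda=D_\Sigma(D_\Sigma+i\lambda)^{-1}$ and on \eqref{gen_com}, which (since $\<P\>^2=1+2H_0$) yields $[\<P\>^2,D_\Sigma]=2[H_0,D_\Sigma]=-2if(H_0)$ in the form sense. The key remark is that, because $f\in S^1(\R;\R)$, the operator $f(H_0)\<P\>^{-2}$ is multiplication in Fourier space by the bounded function $f(x^2/2)\<x\>^{-2}$, hence lies in $\B(\H)$. Conjugating $R_\lambda=i\lambda(D_\Sigma+i\lambda)^{-1}$ by $\<P\>^2$ and commuting $\<P\>^2$ through one resolvent factor, I would arrive at the closed identity
\begin{equation*}
\<P\>^2R_\lambda\<P\>^{-2}
=R_\lambda+2i(D_\Sigma+i\lambda)^{-1}\big[f(H_0)\<P\>^{-2}\big]\<P\>^2R_\lambda\<P\>^{-2}.
\end{equation*}
Since $\|(D_\Sigma+i\lambda)^{-1}\|_{\B(\H)}\le|\lambda|^{-1}$ and $\|R_\lambda\|_{\B(\H)}\le1$, the operator multiplying $\<P\>^2R_\lambda\<P\>^{-2}$ on the right has norm $O(|\lambda|^{-1})$; for $|\lambda|$ large the coefficient $1-2i(D_\Sigma+i\lambda)^{-1}[f(H_0)\<P\>^{-2}]$ is therefore invertible with uniformly bounded inverse, and solving for $\<P\>^2R_\lambda\<P\>^{-2}$ shows $R_\lambda\in\B(\H^2)$ with a bound uniform in $\lambda$. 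The claim $R_\lambda\in\B(\H^{-2})$ then follows by duality, since $R_\lambda^*=R_{-\lambda}$ and $R_{-\lambda}\in\B(\H^2)$ uniformly by the same computation.

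\emph{Part (b), convergence.} Here I would use $(1-R_\lambda)\varphi=(D_\Sigma+i\lambda)^{-1}D_\Sigma\varphi$, valid on $\dom(D_\Sigma)$. For $\varphi\in\S$, Lemma \ref{D_Sigma}(a) gives $D_\Sigma\varphi\in\H^2$ (indeed $\S\subset\H^3_1$ and $D_\Sigma\in\B(\H^3_1,\H^2)$), while $\|(D_\Sigma+i\lambda)^{-1}\|_{\B(\H^2)}=|\lambda|^{-1}\|R_\lambda\|_{\B(\H^2)}=O(|\lambda|^{-1})$ by the bound just obtained. Hence $\|(1-R_\lambda)\varphi\|_{\H^2}\to0$ for each $\varphi\in\S$, and the density of $\S$ in $\H^2$ together with the uniform bound on $\{R_\lambda\}$ promotes this to all of $\H^2$ by an $\varepsilon/3$ argument. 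The $\H^{-2}$ statement is identical, now using $\S\subset\H^{-1}_1$ and $D_\Sigma\in\B(\H^{-1}_1,\H^{-2})$.

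\emph{Main obstacle.} The one delicate point is the $\H^2$ boundedness: the manipulations producing the closed identity for $\<P\>^2R_\lambda\<P\>^{-2}$ involve unbounded operators and must first be read as identities of sesquilinear forms on a core and then closed, with \eqref{gen_com} supplying the only commutator that enters. Once the single symbol estimate $f(H_0)\<P\>^{-2}\in\B(\H)$ is secured, the remaining steps (Neumann inversion, duality, density) are routine. An alternative that avoids these commutators is the Laplace representation $R_\lambda=\int_0^\infty\e^{-s}W_{s/\lambda}\,\d s$, which together with the group bounds on $\H^{\pm2}$ furnished by \eqref{upperbound} reduces both assertions of (b) to dominated convergence as $|\lambda|\to\infty$.
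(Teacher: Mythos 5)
Your part (a) and your convergence argument in (b) coincide with the paper's own proof: (a) is read off the Fourier representation \eqref{InFourier} and extended by a core argument, and the convergence statements follow from $1-R_\lambda=(i\lambda)^{-1}D_\Sigma R_\lambda$ together with the mapping properties in Lemma \ref{D_Sigma}.(a), the uniform bounds, and density of $\S$. The genuinely different step is your primary argument for the $\H^2$-boundedness of $R_\lambda$, and there you have a gap which your prescription ``read as forms on a core and then close'' does not fill. To set up the Neumann equation you must expand
\begin{equation*}
[\<P\>^2,(D_\Sigma+i\lambda)^{-1}]
=-(D_\Sigma+i\lambda)^{-1}[\<P\>^2,D_\Sigma](D_\Sigma+i\lambda)^{-1},
\end{equation*}
but even as a sesquilinear form on $\S\times\S$ this requires applying $\<P\>^2$ (or $f(H_0)$) to vectors in the range of $(D_\Sigma\pm i\lambda)^{-1}$, and nothing guarantees that such vectors lie in $\H^2$ --- that is exactly the statement being proved. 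The circularity is removable: replace $\<P\>^2$ by the bounded regularization $\<P\>^2(1+\varepsilon\<P\>^2)^{-1}$, for which \eqref{moinsun} gives $[D_\Sigma,\<P\>^2(1+\varepsilon\<P\>^2)^{-1}]=2if(H_0)(1+\varepsilon\<P\>^2)^{-2}$, derive your identity among bounded operators, obtain bounds uniform in $\varepsilon$, and let $\varepsilon\searrow0$. But none of this work appears in your proposal, and without it the key estimate is not proved.

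What rescues you is your closing sentence: the ``alternative'' Laplace representation is precisely the paper's proof. The paper writes $(D_\Sigma+i\lambda)^{-1}=i\int_0^{\mp\infty}\d t\,\e^{\lambda t}\e^{-itD_\Sigma}$ for $\sgn(\lambda)=\pm1$, commutes $H_0$ through the group by means of part (a), bounds $\|\xi_t(P)^2\varphi\|$ through \eqref{upperbound}, and integrates to obtain \eqref{H0Rlambda} for $|\lambda|$ larger than the constant in \eqref{upperbound}; the $\B(\H^{-2})$ statement then follows by duality ($R_\lambda^*=R_{-\lambda}$), exactly as you propose, and the convergence statements as above. So the comparison is this: the commutator route, if completed via regularization, would use only \eqref{gen_com} and the symbol bound $f(H_0)\<P\>^{-2}\in\B(\H)$, bypassing the flow estimate \eqref{upperbound}; the paper's route keeps every operator manifestly bounded at each step, which is why no regularization is needed. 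As submitted, only the second route is complete, so you should promote it from a remark to the proof, or else supply the missing regularization argument.
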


\begin{proof}
(a) Let $\varphi\in \e^{itD_\Sigma}\S$. A direct calculation using Formula \eqref{InFourier}
gives
$$
(\F\e^{itD_\Sigma}H_0\e^{-itD_\Sigma}\varphi)(k)=\12\;\!\xi_t(k)^2(\F\varphi)(k),
$$
where $\F$ is the Fourier transform. This together with the essential selfajointness of
$\e^{itD_\Sigma}H_0\e^{-itD_\Sigma}$ on $\e^{itD_\Sigma}\S$ implies the claim.

(b) Let $\varphi\in\H^2$ and take $\lambda\in\R$ with $|\lambda|>\textsc c$, where $\textsc c$
is the constant in the inequality \eqref{upperbound}. Using the (strong) integral formula
$$
(D_\Sigma+i\lambda)^{-1}=i\int_0^{\mp\infty}\d t\,\e^{\lambda t}\e^{-itD_\Sigma},\quad
\sgn(\lambda)=\pm1,
$$
and Relation \eqref{D-homogeneous} we get the equalities
\begin{align*}
(D_\Sigma+i\lambda)^{-1}\varphi&=(H_0+1)^{-1}(D_\Sigma+i\lambda)^{-1}(H_0+1)\varphi\\
&\qquad+i\int_0^{\mp\infty}\d t\,\e^{\lambda t}\[\e^{-itD_\Sigma},(H_0+1)^{-1}\](H_0+1)\varphi\\
&=(H_0+1)^{-1}(D_\Sigma+i\lambda)^{-1}(H_0+1)\varphi\\
&\qquad-i\int_0^{\mp\infty}\d t\,
\e^{\lambda t}(H_0+1)^{-1}\e^{-itD_\Sigma}\[H_0-\12\xi_t(P)^2\]\varphi\\
&=(H_0+1)^{-1}(D_\Sigma+i\lambda)^{-1}\varphi
+\hbox{$\frac i2$}(H_0+1)^{-1}\int_0^{\mp\infty}\d t\,
\e^{\lambda t}\e^{-itD_\Sigma}\xi_t(P)^2\varphi.
\end{align*}
It follows that
$$
H_0R_\lambda\varphi=-\hbox{$\frac\lambda2$}
\int_0^{\mp\infty}\d t\,\e^{\lambda t}\e^{-itD_\Sigma}\xi_t(P)^2\varphi,\quad\sgn(\lambda)=\pm1.
$$
Now $|\lambda|>\textsc c$, and
$\left\|\xi_t(P)^2\varphi\right\|\le(1+\e^{-\textsc ct})\|\varphi\|_{\H^2}$ due to the bound
\eqref{upperbound}. Thus
\begin{align}
\|H_0R_\lambda\varphi\|&\le\hbox{$\frac{|\lambda|}2$}
\int_0^\infty\d t\,\e^{-|\lambda|t}\left\|\xi_{-\sgn(\lambda)t}(P)^2\varphi\right\|\nonumber\\
&\le\hbox{$\frac{|\lambda|}2$}
\int_0^\infty\d t\,\big(\e^{-|\lambda|t}+\e^{(\sgn(\lambda)\textsc c-|\lambda|)t}\big)
\|\varphi\|_{\H^2}\nonumber\\
&\le{\rm Const.}\;\!\|\varphi\|_{\H^2}.\label{H0Rlambda}
\end{align}
Using the estimate \eqref{H0Rlambda} and a duality argument one gets the bounds
\begin{equation}\label{bound+2-2}
\|R_\lambda\|_{\H^2\to\H^2}\le{\rm Const.}\qquad{\rm and}\qquad
\|R_\lambda\|_{\H^{-2}\to\H^{-2}}\le{\rm Const.},
\end{equation}
which imply the first part of the claim. For the second part we remark that
$$
1-R_\lambda=(i\lambda)^{-1}D_\Sigma R_\lambda
$$
on $\H$. Using this together with the bounds \eqref{bound+2-2} one easily shows that $\lim_{|\lambda|\to\infty}\|(1-R_\lambda)\varphi\|_{\H^2}=0$ for each $\varphi\in\H^2$ and that $\lim_{|\lambda|\to\infty}\|(1-R_\lambda)\psi\|_{\H^{-2}}=0$ for each $\psi\in\H^{-2}$.
\end{proof}

\section{Symmetrised time delay}\label{potential_scattering}

In this section we collect some facts on short-range scattering theory in connection with the
existence of symmetrised time delay. We always assume that the potential $V$ satisfies the usual
Agmon-type condition:

\begin{Assumption}\label{potential}
$V$ is a multiplication operator by a real-valued function such that $V$ defines a compact operator
from $\H^2$ to $\H_\kappa$ for some $\kappa>1$.
\end{Assumption}

By using duality, interpolation and the fact that $V$ commutes with the operator $\<Q\>^t$,
$t\in\R$, one shows that $V$ also defines a bounded operator from $\H^{2s}_t$ to
$\H^{2(s-1)}_{t+\kappa}$ for any $s\in[0,1]$, $t\in\R$. Furthermore the operator sum $H:=H_0+V$ is
selfadjoint on $\dom(H)=\H^2$, the wave operators $W_\pm$ exist and are complete, and the
projections $\one_{\Sigma_r}(Q)$ are locally $H$-smooth on $(0,\infty)\setminus\sigma_{\rm pp}(H)$
(see \eg \cite[Sec. 3]{Jensen81} and \cite[Sec. XIII.8]{RSIV}).

Since the first two lemmas are somehow standard, we give their proofs in the appendix.

\begin{Lemma}\label{Hminus}
Let $V$ satisfy Assumption \ref{potential} with $\kappa>1$, and take
$z\in\C\setminus\{\sigma(H_0)\cup\sigma(H)\}$. Then the operator $(H-z)^{-1}$ extends to an element
of $\B\big(\H^{-2s}_t,\H^{2(1-s)}_t\big)$ for each $s\in[0,1]$, $t\in\R$.
\end{Lemma}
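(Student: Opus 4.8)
The plan is to reduce everything to the unweighted case $t=0$, and then to raise the weight by exploiting the extra decay $\kappa$ carried by $V$. I first treat the base case $t=0$. For $s=0$ the assertion is $(H-z)^{-1}\in\B(\H,\H^2)$, which holds because $\dom(H)=\H^2$ (so that, by the closed graph theorem, the graph norm of $H$ is equivalent to $\|\cdot\|_{\H^2}$). For $s=1$ I dualise: since $H$ is selfadjoint, $(H-z)^{-1}$ is the adjoint of $(H-\bar z)^{-1}\in\B(\H,\H^2)$ with respect to the pivot pair $(\H^{-2},\H^2)$, hence extends to an element of $\B(\H^{-2},\H)$. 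Complex interpolation between these endpoints, using $[\H,\H^2]_\theta=\H^{2\theta}$ and $[\H^{-2},\H]_\theta=\H^{-2(1-\theta)}$ with $\theta=1-s$, gives the claim for every $s\in[0,1]$ at $t=0$.

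Two ingredients about weights are needed. The free resolvent satisfies $(H_0-z)^{-1}\in\B(\H^a_t,\H^{a+2}_t)$ for all $a,t\in\R$; this is standard, following either from the fact that $(H_0-z)^{-1}$ has a symbol of order $-2$, or from a commutator expansion of $\<Q\>^t(H_0-z)^{-1}\<Q\>^{-t}$ in which the commutators $[H_0,\<Q\>^t]$ are of lower order. And, as recorded just after Assumption \ref{potential}, $V\in\B(\H^{2s}_t,\H^{2(s-1)}_{t+\kappa})$ for all $s\in[0,1]$, $t\in\R$: $V$ loses two derivatives but gains $\kappa$ units of decay.

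For the induction, duality (the adjoint with respect to the pivot pair, together with the symmetry $s\leftrightarrow1-s$, $t\leftrightarrow-t$) shows it suffices to treat $t\ge0$. Assume the claim holds for all $s\in[0,1]$ and all weights in $[0,T]$, and let $\tau\in[T,T+\kappa]$. For fixed $s$ I apply the second resolvent equation $(H-z)^{-1}=(H_0-z)^{-1}-(H_0-z)^{-1}V(H-z)^{-1}$ to $\varphi\in\S$. The first term maps $\H^{-2s}_\tau$ into $\H^{2(1-s)}_\tau$ by the free-resolvent bound. In the second term, the inductive hypothesis at weight $\max(0,\tau-\kappa)\le T$, combined with the inclusions $\H^a_\tau\subseteq\H^a_{\tau-\kappa}$ and $\H^a_0\subseteq\H^a_{\tau-\kappa}$, gives $(H-z)^{-1}\in\B(\H^{-2s}_\tau,\H^{2(1-s)}_{\tau-\kappa})$; then $V$ sends $\H^{2(1-s)}_{\tau-\kappa}$ into $\H^{-2s}_\tau$ (its mapping property with $1-s$ in place of $s$), and finally $(H_0-z)^{-1}$ sends $\H^{-2s}_\tau=\H^{2(1-s)-2}_\tau$ into $\H^{2(1-s)}_\tau$. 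Composing the norm bounds and extending by density of $\S$ yields the claim for $\tau\in[T,T+\kappa]$, hence for $[0,T+\kappa]$; iterating covers all $t\ge0$, and duality covers $t\le0$.

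The only genuinely delicate point is the weight-raising. The whole mechanism rests on the fact that $V$ improves decay by $\kappa$, so that each application of the resolvent equation lets us trade the decay produced by $V$ for an increase of the output weight while feeding the inner resolvent a strictly lower weight, thereby keeping the induction within the already-established range. By contrast, the base-case interpolation and the duality reductions are routine once the pivot identifications $(\H^a_t)^*=\H^{-a}_{-t}$ are fixed.
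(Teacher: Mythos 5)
Your proposal is correct and follows essentially the same route as the paper's proof: both hinge on the second resolvent equation to bootstrap the weight $t$, exploiting the fact that $V$ trades two derivatives for $\kappa$ units of decay, and both finish off with duality and interpolation between the weighted Sobolev spaces. The differences are organizational rather than substantive: the paper runs the iteration only on the line $s=1$, in unit steps, by conjugating with $\<Q\>^t$ and reducing everything to bounds in $\B(\H)$ via \cite{ACS}, recovering general $s\in[0,1]$ and negative $t$ at the very end by duality and interpolation, whereas you settle the column $t=0$ for all $s$ first, then induct on the weight in steps of size $\kappa$ for each fixed $s$, and obtain $t\le0$ by duality instead of re-running the iteration.
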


Alternate formulations of the next lemma can be found in \cite[Lemma 4.6]{Jensen81} and
\cite[Lemma 3.9]{Tiedra06}. For each $s\ge0$ we define the dense set
$$
\D_s:=\big\{\varphi\in\dom(\<Q\>^s)\mid\eta(H_0)\varphi=\varphi\textrm{ for some }\eta\in
C^\infty_0((0,\infty)\setminus\sigma_{\rm pp}(H))\big\}.
$$

\begin{Lemma}\label{cond_lone}
Let $V$ satisfy Assumption \ref{potential} with $\kappa>2$. Then one has for each $\varphi\in\D_s$
with $s>2$
\begin{equation}
\left\|(W_--1)\e^{-itH_0}\varphi\right\|\in\lone(\R_-,\d t)\label{R-}
\end{equation}
and
\begin{equation}
\left\|(W_+-1)\e^{-itH_0}\varphi\right\|\in\lone(\R_+,\d t).\label{R+}
\end{equation}
\end{Lemma}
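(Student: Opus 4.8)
The plan is to prove the integrability estimate \eqref{R-}; the bound \eqref{R+} is entirely analogous (replacing $W_-$ by $W_+$ and $\R_-$ by $\R_+$). The key identity is the standard representation of $W_--1$ via Cook's method: for $\varphi\in\D_s$ one writes
\begin{equation*}
(W_--1)\e^{-itH_0}\varphi
=-i\int_{-\infty}^t\d\tau\,\e^{i\tau H}V\e^{-i\tau H_0}\varphi,
\end{equation*}
so that, after pulling the integral outside and using unitarity of $\e^{i\tau H}$, the norm $\|(W_--1)\e^{-itH_0}\varphi\|$ is bounded by $\int_{-\infty}^t\d\tau\,\|V\e^{-i\tau H_0}\varphi\|$. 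The whole matter thus reduces to controlling the free propagation factor $\|V\e^{-i\tau H_0}\varphi\|$ for large negative times.

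The main analytic input is a propagation (minimal-velocity) estimate for the free evolution. Since $\varphi\in\D_s$ has energy localisation $\eta(H_0)\varphi=\varphi$ with $\eta\in C^\infty_0((0,\infty)\setminus\sigma_{\rm pp}(H))$ supported away from zero energy, and since $\varphi\in\dom(\<Q\>^s)$ with $s>2$, the standard estimate gives, for any $\ell$ with $0\le\ell\le s$,
\begin{equation*}
\big\|\<Q\>^{-\ell}\e^{-i\tau H_0}\varphi\big\|
\le{\rm Const.}\;\!\<\tau\>^{-\ell}\|\varphi\|_{\dom(\<Q\>^\ell)},
\qquad \tau<0 .
\end{equation*}
This is proved by commuting $\<Q\>^{-\ell}$ through $\e^{-i\tau H_0}$: on the range of $\eta(H_0)$ the operator $\e^{i\tau H_0}\<Q\>^{-\ell}\e^{-i\tau H_0}$ behaves like $\<Q-\tau P\>^{-\ell}$, and the momentum support in $\supp(\eta)\subset(0,\infty)$ is bounded away from $0$, so $|Q-\tau P|$ grows linearly in $|\tau|$ on that spectral subspace. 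I would establish this by a routine stationary-phase or commutator argument, using that $\eta(H_0)$ restricts to momenta with $|p|\ge c>0$.

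The two ingredients are then combined. Writing $V=\<Q\>^{-\kappa}\big(\<Q\>^{\kappa}V\big)$ and using that by Assumption \ref{potential} the operator $\<Q\>^{\kappa}V$ is bounded from $\H^2$ to $\H$ (equivalently $V\in\B(\H^2,\H_\kappa)$), one gets
\begin{equation*}
\|V\e^{-i\tau H_0}\varphi\|
\le{\rm Const.}\;\!\big\|\<Q\>^{-\kappa}\e^{-i\tau H_0}\varphi\big\|_{\H^2}
\le{\rm Const.}\;\!\<\tau\>^{-\kappa},
\end{equation*}
where in the last step I apply the propagation estimate with $\ell=\kappa$ (legitimate since $\kappa>2$ and $\varphi\in\D_s$ with $s>2\ge$ the required regularity, after absorbing the harmless $\H^2$-localisation coming from $\eta(H_0)$). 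Since $\kappa>2>1$, the function $\tau\mapsto\<\tau\>^{-\kappa}$ is integrable on $\R_-$; hence $\int_{-\infty}^0\d t\,\|(W_--1)\e^{-itH_0}\varphi\|\le\int_{-\infty}^0\d t\int_{-\infty}^t\d\tau\,\<\tau\>^{-\kappa}<\infty$ after a Fubini exchange, which yields \eqref{R-}.

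The step I expect to be the main obstacle is the propagation estimate with the sharp power $\<\tau\>^{-\kappa}$, i.e.\ transferring the spatial decay of $\varphi$ (encoded in $\dom(\<Q\>^s)$, $s>2$) into time decay at the rate dictated by $\kappa$. The care needed is to track how many powers of $\<Q\>$ can be converted into powers of $\<\tau\>$ given the available regularity $s>2$, and to ensure the energy cutoff $\eta$ supplies the required $\H^2$-boundedness and the separation of momentum support from the origin; once the clean estimate $\|V\e^{-i\tau H_0}\varphi\|\le{\rm Const.}\,\<\tau\>^{-\kappa}$ is in hand, the remaining integrations are immediate.
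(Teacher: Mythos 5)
Your proposal follows the same route as the paper's proof (Cook's formula, then a propagation estimate for the energy-localized free evolution, then a double integration in time), but it contains a genuine gap at the decisive step. You state the propagation estimate, correctly, only for weights $0\le\ell\le s$ --- the time-decay rate obtainable is capped by the spatial decay available on $\varphi$ --- and then you apply it with $\ell=\kappa$. Nothing in the hypotheses gives $\kappa\le s$: the lemma permits, say, $\kappa=10$ while $s=5/2$, and then $\<Q\>^{\kappa}\varphi$ need not even define a vector of $\H$, since one only knows $\varphi\in\dom(\<Q\>^s)$. The claimed bound $\|V\e^{-i\tau H_0}\varphi\|\le{\rm Const.}\,\<\tau\>^{-\kappa}$ is then false in general: the part of $\e^{-i\tau H_0}\varphi$ located near the origin at time $\tau$ originates from the region $|x|\sim|\tau|\,|p|$ at time $0$, where $\varphi$ has only decay of order $s$ in the weighted sense, so $\|V\e^{-i\tau H_0}\varphi\|$ cannot decay faster than $\<\tau\>^{-s}$ no matter how large $\kappa$ is. Your parenthetical justification (``legitimate since $\kappa>2$ and $s>2\ge$ the required regularity'') conflates the two exponents exactly where they must be kept apart.

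The repair is what the paper does: set $\zeta:=\min\{\kappa,s\}$; then $\<Q\>^{\zeta}\varphi\in\H$, the operator $V\<P\>^{-2}\<Q\>^{\zeta}$ belongs to $\B(\H)$ by Assumption \ref{potential} (since $\zeta\le\kappa$), and the propagation estimate of \cite[Lemma 9]{ACS} yields $\|V\e^{-i\tau H_0}\varphi\|\le{\rm Const.}\,(1+|\tau|)^{-\zeta+\varepsilon}$ for each $\varepsilon>0$. Since $\kappa>2$ and $s>2$ one has $\zeta>2$, and the double integral $\int_{-\infty}^{-\delta}\d t\int_{-\infty}^{t}\d\tau\,\|V\e^{-i\tau H_0}\varphi\|$ converges, which is what is needed. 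On this last point your write-up is also loose: you justify convergence by ``$\kappa>2>1$, so $\<\tau\>^{-\kappa}$ is integrable'', but integrability of the inner function is not sufficient; after the Fubini exchange the double integral is comparable to $\int_{-\infty}^{-\delta}\d\tau\,|\tau|\,\<\tau\>^{-\zeta}$, so a decay rate strictly greater than $2$ is indispensable --- this is precisely why the lemma assumes $\kappa>2$ rather than $\kappa>1$, and why the cap $\zeta=\min\{\kappa,s\}$, not $\kappa$, is the exponent that matters.
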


\begin{Lemma}\label{S_mapping}
Let $V$ satisfy Assumption \ref{potential} with $\kappa>4$, and let $\varphi\in\D_s$ for some
$s>2$. Then there exists $s'>2$ such that $S\varphi\in\D_{s'}$, and the following conditions
are satisfied:
$$
\left\|(W_--1)\e^{-itH_0}\varphi\right\|\in\lone(\R_-,\d t)\quad{\rm and}\quad
\left\|(W_+-1)\e^{-itH_0}S\varphi\right\|\in\lone(\R_+,\d t).
$$
\end{Lemma}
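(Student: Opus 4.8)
The plan is to reduce the whole statement to the single assertion that $S\varphi\in\D_{s'}$ for some $s'>2$. Indeed, the first integrability condition is exactly the $\R_-$ half of Lemma \ref{cond_lone} applied to $\varphi$ (legitimate since $\kappa>4>2$ and $\varphi\in\D_s$ with $s>2$), so it requires no new work. The second integrability condition is the $\R_+$ half of Lemma \ref{cond_lone}, but applied to the vector $S\varphi$ in place of $\varphi$; it follows immediately once we know that $S\varphi$ again belongs to some $\D_{s'}$ with $s'>2$. Thus the entire content of the lemma is the mapping property $S:\D_s\to\D_{s'}$.

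To establish $S\varphi\in\D_{s'}$ I would treat its two defining requirements separately. The energy localization is the easy one: the wave operators intertwine $H_0$ and $H$, so $SH_0=H_0S$ and hence $S$ commutes with every bounded function of $H_0$. In particular, if $\eta\in C^\infty_0((0,\infty)\setminus\sigma_{\rm pp}(H))$ satisfies $\eta(H_0)\varphi=\varphi$, then $\eta(H_0)S\varphi=S\eta(H_0)\varphi=S\varphi$, so the very same cutoff $\eta$ localizes $S\varphi$ away from the thresholds $0,\infty$ and from $\sigma_{\rm pp}(H)$.

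The substantial point is the weighted estimate $S\varphi\in\dom(\<Q\>^{s'})$ for some $s'>2$. Writing $S\varphi=\varphi+(S-1)\varphi$ and using $\varphi\in\dom(\<Q\>^s)$, it suffices to prove $\<Q\>^{s'}(S-1)\varphi\in\H$ for some $2<s'\le s$. Here I would invoke the stationary representation of the scattering operator: since $S$ commutes with $H_0$, it acts fibrewise as a scattering matrix $S(\lambda)$ in the spectral representation of $H_0$, with $S(\lambda)-1$ expressed through the sandwiched resolvent $V-VR(\lambda+i0)V$ restricted to the energy shell. In this representation the weight $\<Q\>^{s'}$ becomes (modulo lower-order terms and the harmless factors $\<P\>^{\pm1}$, which are bounded on the support of the energy cutoff) differentiation of order $s'$ in the spectral variable of $H_0$. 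Differentiating $VR(\lambda+i0)V$ produces powers $R(\lambda+i0)^{k+1}$, whose boundary values are controlled between weighted Sobolev spaces by the limiting absorption principle together with the resolvent mapping property of Lemma \ref{Hminus}; each differentiation is paid for by the decay furnished by $V$, and the hypothesis $\kappa>4$ is exactly what allows slightly more than two energy derivatives and thus a weight $s'>2$ with $s'\le s$. A commutator variant, estimating iterated commutators $[D,[D,\dots[D,S]]]$ of the dilation generator with $S$, would give the same conclusion in a form closer to the dilation theme of the paper.

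The main obstacle is precisely this weighted bound. The delicate issues are to justify the energy-differentiation of the boundary values $R(\lambda\pm i0)$ and to track, derivative by derivative, the weights on either side of each resolvent factor so that the resulting product lands in $\B(\H_t,\H_{-t})$ for the relevant $t$; the cutoff supplied in the previous step keeps $\lambda$ in a compact subset of $(0,\infty)\setminus\sigma_{\rm pp}(H)$ where the limiting absorption principle applies and $S(\cdot)$ is smooth. Once the estimate holds with some $s'>2$, the strict inequality guarantees $S\varphi\in\D_{s'}$ with $s'>2$, which closes the argument and, through the $\R_+$ half of Lemma \ref{cond_lone}, yields the remaining integrability condition.
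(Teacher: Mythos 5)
Your reduction is exactly the one in the paper: both integrability conditions are instances of Lemma \ref{cond_lone} (applicable since $\kappa>4>2$), applied to $\varphi$ on $\R_-$ and to $S\varphi$ on $\R_+$, so the whole lemma collapses to the mapping property $S\varphi\in\D_{s'}$ for some $s'>2$; your observation that the energy cutoff passes through $S$ because $S$ commutes with $H_0$ correctly isolates the weighted bound $S\varphi\in\dom(\<Q\>^{s'})$ as the only real issue. The difference lies in how that bound is handled: the paper does not prove it at all, but disposes of it by citing \cite[Thm.~1.4.(ii)]{Jensen/Nakamura}, whereas you attempt a self-contained stationary argument.

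That attempt, as written, has a genuine gap: everything that makes the lemma nontrivial is asserted rather than proved. Concretely: (i) the identification of the weight $\<Q\>^{s'}$ with differentiation of order $s'$ in the spectral variable is only heuristic for non-integer $s'$ and needs an interpolation argument to be made precise; (ii) the claim that ``$\kappa>4$ is exactly what allows slightly more than two energy derivatives'' is precisely the quantitative content of the cited theorem --- one must track that each $\lambda$-derivative of $R(\lambda+i0)$ raises the power of the resolvent by one, that the limiting absorption principle then demands correspondingly larger weights, and that this cost is absorbed by the two factors of $V$ (decay $\kappa$) flanking the resolvent in the stationary formula for $S(\lambda)-1$, the budget closing with some $s'$ satisfying $2<s'\le s$; (iii) the limiting absorption principle itself, and the differentiability of the boundary values $R(\lambda\pm i0)$ in the relevant weighted topologies on compact subsets of $(0,\infty)\setminus\sigma_{\rm pp}(H)$, are established nowhere in the paper --- Lemma \ref{Hminus} only concerns $(H-z)^{-1}$ for $z$ off the spectrum, not boundary values up to the real axis, so it cannot serve as a substitute. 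Carrying out (i)--(iii) amounts to reproving the Jensen--Nakamura mapping theorem. So your architecture is right --- it is essentially the strategy underlying the result the paper cites --- but as it stands it is a program rather than a proof; to be complete it should either execute these steps in detail or, as the paper does, invoke the known mapping theorem for scattering operators.
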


\begin{proof}
The first part of the claim follows by \cite[Thm. 1.4.(ii)]{Jensen/Nakamura}. Since
$\varphi\in\D_s$ and $S\varphi\in\D_{s'}$ with $s,s'>2$, the second part of the claim follows
by Lemma \ref{cond_lone}.
\end{proof}

\begin{Theorem}\label{time_delay}
Let $\Sigma$ satisfy Assumption \ref{Sigma}. Suppose that $V$ satisfies Assumption
\ref{potential} with $\kappa>4$. Let $\varphi\in\D_s$ with $s>2$. Then the limit of
$\tau_r(\varphi)$ as $r\to\infty$ exists, and one has
\begin{equation}\label{eq_time}
\lim_{r\to\infty}\tau_r(\varphi)
=-\big\langle f(H_0)^{-1/2}\varphi,S^*[D_\Sigma,S]f(H_0)^{-1/2}\varphi\big\rangle.
\end{equation}
\end{Theorem}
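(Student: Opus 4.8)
The plan is to strip $\tau_r(\varphi)$ down to a pure free-motion quantity and then recognise that quantity as an expectation of $D_\Sigma$ dressed by $f(H_0)^{-1/2}$. Writing $\int_{x\in\Sigma_r}\d^dx\,|\psi(x)|^2=\langle\psi,\one_{\Sigma_r}(Q)\psi\rangle$ and using the intertwining relation $\e^{-itH}W_-=W_-\e^{-itH_0}$ together with $W_-=W_+S$, I would first cut each time integral at $t=0$ and regroup the six resulting pieces into an ``interaction'' part, comparing the true state $\e^{-itH}W_-\varphi$ with its free asymptotes $\e^{-itH_0}\varphi$ as $t\to-\infty$ and $\e^{-itH_0}S\varphi$ as $t\to+\infty$, and a ``free'' part. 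The interaction integrands obey the Cauchy--Schwarz bound $\big|\,\|\one_{\Sigma_r}(Q)a\|^2-\|\one_{\Sigma_r}(Q)b\|^2\big|\le\|a-b\|\,(\|a\|+\|b\|)$ with $a-b=(W_\mp-1)\e^{-itH_0}(\cdot)$, so by \eqref{R-}--\eqref{R+} (applicable since $\kappa>4$ gives $\varphi,S\varphi\in\D_s$, $s>2$, via Lemma \ref{S_mapping}) they are dominated uniformly in $r$ by $L^1$ functions of $t$; since $\one_{\Sigma_r}(Q)\to1$ strongly and $W_\pm$ are isometric, each interaction integrand tends pointwise in $t$ to $0$, and the interaction part vanishes by dominated convergence. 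This leaves
\begin{equation*}
\lim_{r\to\infty}\tau_r(\varphi)=\lim_{r\to\infty}\big[C_r(S\varphi)-C_r(\varphi)\big],\qquad
C_r(\psi):=\tfrac12\int_{-\infty}^\infty\d t\,\sgn(t)\,\langle\psi,\one_{\Sigma_r}(Q+tP)\psi\rangle,
\end{equation*}
where I used $\e^{itH_0}\one_{\Sigma_r}(Q)\e^{-itH_0}=\one_{\Sigma_r}(Q+tP)$.

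The heart of the matter is the free limit $\lim_{r\to\infty}C_r(\psi)=\langle\psi,\Xi\psi\rangle$ with $\Xi:=\tfrac12\big[(\nabla G_\Sigma)(P)\cdot Q+Q\cdot(\nabla G_\Sigma)(P)\big]$, and it is here that Assumption \ref{Sigma} and the geometry of $\Sigma$ enter. The leading, $O(r)$, part of $C_r(\psi)$ is governed by $\int_0^\infty\d\mu\,[\one_\Sigma(\mu P)-\one_\Sigma(-\mu P)]$, which vanishes identically by Assumption \ref{Sigma}; the surviving $O(1)$ term is extracted from the $Q$-dependence of $\one_\Sigma\big((Q+tP)/r\big)$. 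At the classical level, for $p\ne0$ the entrance/exit times $t_\pm$ of the ray $t\mapsto q+tp$ across $\Sigma_r$ satisfy $\tfrac12(t_++t_-)\to q\cdot(\nabla G_\Sigma)(p)$ as $r\to\infty$ (one checks this directly for $\Sigma=\mathcal B$, where $\nabla G_{\mathcal B}(p)=-p/|p|^2$), which is precisely the symbol of $\Xi$. This sojourn-time asymptotics for general symmetric $\Sigma$ is the content of \cite{GT07}, which I would invoke both for the existence of the limit and for the identity $\lim_rC_r(\psi)=\langle\psi,\Xi\psi\rangle$ on $\D_s$; the momentum cutoff $\eta(H_0)$ built into $\D_s$ is indispensable because $\nabla G_\Sigma$ is homogeneous of degree $-1$ and hence singular at $p=0$.

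It then remains to recast $\Xi$. From $S^*S=1$ one gets $C_r(S\varphi)-C_r(\varphi)\to\langle\varphi,S^*[\Xi,S]\varphi\rangle$, and since $[S,H_0]=0$ forces $[S,f(H_0)]=0$, it suffices to establish the operator identity $\Xi=-f(H_0)^{-1/2}D_\Sigma f(H_0)^{-1/2}$. I would prove this by writing $F_\Sigma(P)=-(\nabla G_\Sigma)(P)f(H_0)$, conjugating $D_\Sigma=\tfrac12[F_\Sigma(P)\cdot Q+Q\cdot F_\Sigma(P)]$ by $f(H_0)^{-1/2}$, and commuting the factors $f(H_0)^{\pm1/2}$ through $Q$: the two correction terms produced are functions of $P$ alone and cancel, leaving exactly $\Xi$. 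Substituting the identity and moving the (self-adjoint, $S$-commuting) factors $f(H_0)^{-1/2}$ outward yields \eqref{eq_time}. As a check, $\Sigma=\mathcal B$ and $f(u)=2u$ give $D_\Sigma=D$, $f(H_0)=2H_0$, and reduce \eqref{eq_time} to \eqref{sweetie}.

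I expect the main obstacle to be the free limit of the second paragraph: turning the classical chord-time asymptotics into a rigorous operator statement requires controlling the non-commutativity of $Q$ and $P$ inside $\one_{\Sigma_r}(Q+tP)$ and the merely conditional convergence of the $t$-integral, which is exactly why $\varphi\in\D_s$ with $s>2$ (position moments plus momentum localisation) is imposed and why I would rely on the estimates of \cite{GT07}. A secondary point, supplied by Lemma \ref{D_Sigma}(a) and Lemma \ref{R_lambda}, is to guarantee that $D_\Sigma f(H_0)^{-1/2}\psi$ is a genuine vector and that the conjugation identity holds beyond the formal level: the mapping properties of $D_\Sigma$ on weighted Sobolev spaces and the mollified resolvent $R_\lambda$ furnish the approximation argument legitimising the commutations of $f(H_0)^{\pm1/2}$ and $D_\Sigma$ through $Q$.
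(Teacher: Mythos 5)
Your proposal is correct and follows essentially the same route as the paper: the paper likewise uses Lemma \ref{S_mapping} to verify the hypotheses of the abstract result in \cite{GT07}, invokes \cite[Thm.~4.6]{GT07} for the existence of the limit and the free-motion formula $\lim_{r\to\infty}\tau_r(\varphi)=-\frac12\big\langle\varphi,S^*\big[i[Q^2,G_\Sigma(P)],S\big]\varphi\big\rangle$ (the same hard analytic content you delegate to \cite{GT07}), and then performs exactly your algebraic step, namely inserting $f(H_0)^{\pm1/2}$, using $[S,f(H_0)]=0$, and exploiting the cancellation of the commutator correction terms (via $P\cdot(\nabla G_\Sigma)(P)=-1$) to identify $f(H_0)^{1/2}\big(Q\cdot(\nabla G_\Sigma)(P)+(\nabla G_\Sigma)(P)\cdot Q\big)f(H_0)^{1/2}=-2D_\Sigma$. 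The only difference is presentational: you unpack the reduction of $\tau_r(\varphi)$ to the free quantity $C_r(S\varphi)-C_r(\varphi)$, a step the paper leaves encapsulated in the citation of \cite{GT07}.
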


\begin{proof}
Due to Lemma \ref{S_mapping} all the assumptions for the existence of
$\lim_{r\to\infty}\tau_r(\varphi)$ are verified (see \cite[Sec. 4]{GT07}), and we know by
Theorem \cite[Thm. 4.6]{GT07} that
$$
\lim_{r\to\infty}\tau_r(\varphi)=-\12\<\varphi,S^*\[i[Q^2,G_\Sigma(P)],S\]\varphi\>.
$$
It follows that
\begin{align*}
\lim_{r\to\infty}\tau_r(\varphi)
&=\12\<\varphi,S^*[Q\cdot(\nabla G_\Sigma)(P)+(\nabla G_\Sigma)(P)\cdot Q,S]\varphi\>\\
&=\12\big\langle f(H_0)^{-1/2}\varphi,S^*\big[f(H_0)^{1/2}\big(Q\cdot(\nabla G_\Sigma)(P)\\
&\qquad\qquad\qquad\qquad
+(\nabla G_\Sigma)(P)\cdot Q\big)f(H_0)^{1/2},S\big]f(H_0)^{-1/2}\varphi\big\rangle\\
&=-\big\langle f(H_0)^{-1/2}\varphi,S^*[D_\Sigma,S]f(H_0)^{-1/2}\varphi\big\rangle.
\end{align*}
\end{proof}

Note that Theorem \ref{time_delay} can be proved with the function $f(u)=2u$, even if $\Sigma$ is
not spherical. Indeed, in such a case, point (ii) of Definition \ref{F_Sigma} is the only
assumption not satisfied by $f$, and a direct inspection shows that this assumption does not play
any role in the proof of Theorem \ref{time_delay}.

\begin{Remark}
Some results of the literature suggest that Theorem \ref{time_delay} may be proved under a less
restrictive decay assumption on $V$ if one modifies some of the previous definitions. Typically
one proves the existence of (usual) time delay for potentials decaying more rapidly than
$|x|^{-2}$ (or even $|x|^{-1}$) at infinity by using smooth cutoff in configuration space and by
considering particular potentials. The reader is referred to \cite{ACS,MSA92,Nak87,Wan87,Wan88}
for more informations on this issue.
\end{Remark}

\section{Anisotropic Lavine's formula}\label{section_lavine}

In this section we prove the anisotropic Lavine's formula \eqref{pizza_1}. We first give a precise
meaning to some commutators.

\begin{Lemma}\label{D_SigmaV}
Let $\Sigma$ be a bounded open set in $\R^d$ containing $0$ with boundary $\partial\Sigma$ of class
$C^4$. Let $V$ satisfy Assumption \ref{potential} with $\kappa>1$. Then
\begin{enumerate}
\item[(a)] The commutator $[V,D_\Sigma]$, defined as a sesquilinear form on
$\dom(D_\Sigma)\cap\H^2$, extends uniquely to an element of $\B(\H^2,\H^{-2})$.
\item[(b)] For each $t\in\R$ the commutator $[D_\Sigma,\e^{-itH}]$, defined as a sesquilinear form
on $\dom(D_\Sigma)\cap\H^2$, extends uniquely to an element $[D_\Sigma,\e^{-itH}]^a$ of
$\B(\H^2,\H^{-2})$ which satisfies
$$
\big\|[D_\Sigma,\e^{-itH}]^a\big\|_{\H^2\to\H^{-2}}\le{\rm Const.}\;\!|t|.
$$
\item[(c)] For each $\eta\in C^\infty_0(\R)$ the commutator $[D_\Sigma,\eta(H)]$, defined as a
sesquilinear form on $\dom(D_\Sigma)\cap\H^2$, extends uniquely to an element of $\B(\H)$. In
particular, the operator $\eta(H)$ leaves $\dom(D_\Sigma)$ invariant.
\end{enumerate}
\end{Lemma}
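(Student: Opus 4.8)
The plan is to prove the three parts in sequence, using the mapping properties of $D_\Sigma$ from Lemma \ref{D_Sigma}(a) as the main engine, together with the mapping properties of $V$ that follow from Assumption \ref{potential}. Throughout I will work with the sesquilinear forms on the core $\dom(D_\Sigma)\cap\H^2$ and extend by density and duality.

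For part (a), I would expand the commutator formally as
\begin{equation*}
[V,D_\Sigma]=VD_\Sigma-D_\Sigma V
\end{equation*}
and estimate each term separately as a map $\H^2\to\H^{-2}$. By Lemma \ref{D_Sigma}(a) the operator $D_\Sigma$ maps $\H^2=\H^2_0$ into $\H^1_{-1}$ (taking $s=2$, $t=0$), and $V$ maps $\H^1_{-1}$ into $\H^{-1}_{\kappa-1}\subset\H^{-2}$ since $\kappa>1$; this controls $VD_\Sigma$. For the term $D_\Sigma V$, I would use that $V$ maps $\H^2$ into $\H_\kappa\subset\H^0_1$ (again since $\kappa>1$), and then $D_\Sigma$ maps $\H^0_1$ into $\H^{-1}_0\subset\H^{-2}$ by Lemma \ref{D_Sigma}(a) with $s=0$, $t=1$. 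The key point is that the admissible weight range $t\in[-2,0]\cup[1,3]$ in Lemma \ref{D_Sigma}(a) is exactly what lets $D_\Sigma$ act both before and after $V$; the decay $\kappa>1$ provides just enough weight gain to absorb the weight loss $t\mapsto t-1$ incurred by $D_\Sigma$. Uniqueness of the extension follows from density of $\dom(D_\Sigma)\cap\H^2$ in $\H^2$.

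For part (b), the natural route is to write the commutator with the unitary group as an integral of the commutator with the generator. Formally,
\begin{equation*}
[D_\Sigma,\e^{-itH}]=-i\int_0^t\d s\,\e^{-i(t-s)H}[D_\Sigma,H]\e^{-isH},
\end{equation*}
where $[D_\Sigma,H]=[D_\Sigma,H_0]+[D_\Sigma,V]$. The term $[D_\Sigma,H_0]=-f(H_0)$ by \eqref{gen_com} of Lemma \ref{D_Sigma}(b), and $[D_\Sigma,V]=-[V,D_\Sigma]$ is in $\B(\H^2,\H^{-2})$ by part (a); since $f(H_0)$ also maps $\H^2\to\H^0\subset\H^{-2}$, the integrand $[D_\Sigma,H]$ is a fixed element of $\B(\H^2,\H^{-2})$. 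The propagators $\e^{-isH}$ leave $\H^2$ and $\H^{-2}$ invariant with uniformly bounded norms (because $H$ is selfadjoint on $\H^2$ and $V\in\B(\H^2,\H^0)$ gives equivalence of the graph norm with the $\H^2$ norm), so the integrand is bounded in $\B(\H^2,\H^{-2})$ uniformly in $s\in[0,t]$, and integrating over an interval of length $|t|$ yields the linear-in-$|t|$ bound claimed. The main obstacle here, and the step I expect to require the most care, is justifying the integral formula rigorously as an identity of sesquilinear forms on $\dom(D_\Sigma)\cap\H^2$ rather than merely manipulating operators formally: one must check that $s\mapsto\e^{-isH}\varphi$ stays in a domain on which the form $[D_\Sigma,H]$ makes sense, which is handled by the invariance of $\H^2$ under $\e^{-isH}$ and the fundamental theorem of calculus applied to the scalar function $s\mapsto\langle\e^{i(t-s)H}\psi,D_\Sigma\e^{-isH}\varphi\rangle$ for $\varphi,\psi$ in suitable dense sets.

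For part (c), I would represent $\eta(H)$ via the Helffer--Sj\"ostrand formula (or an almost-analytic extension $\widetilde\eta$),
\begin{equation*}
\eta(H)=\frac1\pi\int_\C\d^2z\,(\bar\partial\widetilde\eta)(z)\,(H-z)^{-1},
\end{equation*}
and then compute $[D_\Sigma,\eta(H)]$ by commuting $D_\Sigma$ through the resolvent, using the identity $[D_\Sigma,(H-z)^{-1}]=-(H-z)^{-1}[D_\Sigma,H](H-z)^{-1}$. Here $[D_\Sigma,H]\in\B(\H^2,\H^{-2})$ from the proof of part (b), while $(H-z)^{-1}$ maps $\H^{-2}\to\H^0$ and $\H^0\to\H^2$ by Lemma \ref{Hminus}; composing, the commutator with the resolvent lands in $\B(\H)$ with a norm controlled by a power of $\langle\im z\rangle/|\im z|$, which the rapid decay of $\bar\partial\widetilde\eta$ near the real axis (from the almost-analyticity) renders integrable. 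This gives $[D_\Sigma,\eta(H)]\in\B(\H)$. The final assertion that $\eta(H)$ leaves $\dom(D_\Sigma)$ invariant then follows from the standard fact that if the form $[D_\Sigma,\eta(H)]$ is bounded on $\H$, then for $\varphi\in\dom(D_\Sigma)$ one has $\eta(H)\varphi\in\dom(D_\Sigma^*)=\dom(D_\Sigma)$, since $D_\Sigma$ is selfadjoint by Lemma \ref{D_Sigma}(a).
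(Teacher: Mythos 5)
Your part (a) is correct and coincides with the paper's own argument: the paper settles (a) by combining Lemma \ref{D_Sigma}.(a) with the mapping properties of $V$ derived from Assumption \ref{potential}, exactly as you do, and your bookkeeping with the admissible weights ($s=2$, $t=0$ for the term $VD_\Sigma$; $s=0$, $t=1$ for the term $D_\Sigma V$) is right. The closing observation in your (c) --- that a bounded commutator form together with selfadjointness of $D_\Sigma$ yields invariance of $\dom(D_\Sigma)$ --- is also fine.

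The genuine gap is in (b) and (c): you never regularize $D_\Sigma$, and the identities on which both parts rest are then not merely unproved but undefined. In (b) you want to apply the fundamental theorem of calculus to $s\mapsto\langle\e^{i(t-s)H}\psi,D_\Sigma\e^{-isH}\varphi\rangle$, and you claim the domain question ``is handled by the invariance of $\H^2$ under $\e^{-isH}$''. It is not: $\H^2\not\subset\dom(D_\Sigma)$ (already for $\Sigma=\mathcal B$, $f(u)=2u$, i.e. for the ordinary dilation generator $D$, since $Q$ is not $H_0$-bounded), and nothing available at this stage guarantees that $\e^{-isH}$ --- or, in your (c), $(H-z)^{-1}$ --- maps $\dom(D_\Sigma)$ into itself; such invariance statements are precisely part of what the lemma is asserting. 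Hence for $0<s<t$ the vector $D_\Sigma\e^{-isH}\varphi$ is meaningless and your scalar function is not defined; the same objection hits the identity $[D_\Sigma,(H-z)^{-1}]=-(H-z)^{-1}[D_\Sigma,H](H-z)^{-1}$, whose derivation requires commuting $D_\Sigma$ across the resolvent. This is exactly the role of the mollified resolvent $R_\lambda=i\lambda(D_\Sigma+i\lambda)^{-1}$ of Lemma \ref{R_lambda}.(b), which the paper cites (together with point (a)) when it invokes the argument of \cite[Lemma 7.4]{PSS81}: one replaces $D_\Sigma$ by the bounded operators $D_\Sigma R_\lambda$, for which Duhamel's formula is legitimate because the integrand $i[H,D_\Sigma R_\lambda]$ is a well-defined form on $\H^2$ and $\e^{-isH}$ does preserve $\H^2$, and then lets $|\lambda|\to\infty$ using the strong convergence $R_\lambda\to1$ in both $\H^2$ and $\H^{-2}$. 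Your remaining estimates --- uniform boundedness of $\e^{-isH}$ on $\H^{\pm2}$, $[D_\Sigma,H]^a\in\B(\H^2,\H^{-2})$, the resolvent chain $\H^{-2}\to\H\to\H^2$ from Lemma \ref{Hminus} --- are sound and would complete the proof once this regularization step is inserted. Note also that the Helffer--Sj\"ostrand route in (c) needs resolvent bounds with explicit dependence on $\im z$, which Lemma \ref{Hminus} as stated does not provide; it is cleaner to deduce (c) from (b) by writing $(H-z)^{-1}$ as a Laplace transform of the group $\e^{-itH}$, or to sandwich the commutator between resolvents as in \cite[Lemma 7.4]{PSS81}. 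As it stands, your proposal assumes the domain invariance it is supposed to prove.
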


\begin{proof}
Point (a) follows easily from Lemma \ref{D_Sigma}.(a) and the hypotheses on $V$. Given point (a)
and Lemma \ref{R_lambda}.(b), one shows points (b) and (c) as in \cite[Lemma 7.4]{PSS81}.
\end{proof}

If $V$ satisfies Assumption \ref{potential} with $\kappa>2$, then the result of Lemma
\ref{D_SigmaV}.(a) can be improved by using Lemma \ref{D_Sigma}.(a). Namely, there exists
$\delta>\12$ such that the commutator $[V,D_\Sigma]$, defined as a sesquilinear form on
$\dom(D_\Sigma)\cap\H^2$, extends uniquely to an element $[V,D_\Sigma]^a$ of
$\B\big(\H^2_{-\delta},\H^{-2}_\delta\big)$.

Next Lemma is a generalisation of \cite[Lemmas 2.5 \& 2.7]{Jen84}. It is proved under the following
assumption on the function $f$.

\begin{Assumption}\label{difference}
For each $t\in\R$
there exists $\rho>1$ such that the operator $f(H)-f(H_0)$, defined on $\H^2$,
extends to an element of $\B\big(\H^2_t,\H_{t+\rho}\big)$.
\end{Assumption}

We refer to Remark \ref{bibo} for examples of admissible functions $f$. Here we only note that the
operator
$$
\V_{\Sigma,f}:=f(H)-i[H,D_\Sigma]^a=f(H)-f(H_0)-i[V,D_\Sigma]^a.
$$
belongs to $\B(\H^2_{-\delta},\H^{-2}_\delta)$ for some $\delta>\12$ as soon as $f$ satisfies
Assumption \ref{difference}.

\begin{Lemma}\label{giantcoucou}
Let $\Sigma$ be a bounded open set in $\R^d$ containing $0$, with boundary $\partial\Sigma$ of
class $C^4$. Let $V$ satisfy Assumption \ref{potential} with $\kappa>2$. Suppose that Assumption
\ref{difference} is verified. Then
\begin{enumerate}
\item[(a)] One has for each $\eta\in C^\infty_0((0,\infty)\setminus\sigma_{\rm pp}(H))$ and each
$t\in\R$ the inequality
$$
\big\|(D_\Sigma+i)^{-1}\e^{-itH}\eta(H)(D_\Sigma+i)^{-1}\big\|\le{\rm Const.}\<t\>^{-1}.
$$
\item[(b)] For each $\eta\in C^\infty_0((0,\infty)\setminus\sigma_{\rm pp}(H))$ the operators
$[D_\Sigma,W_\pm\eta(H_0)]$ and $[D_\Sigma,W_\pm^*\eta(H)]$, defined as sesquilinear forms on
$\dom(D_\Sigma)$, extend uniquely to elements of $\B(\H)$. In particular, the operators
$W_\pm\eta(H_0)$ and $W_\pm^*\eta(H)$ leave $\dom(D_\Sigma)$ invariant.
\end{enumerate}
\end{Lemma}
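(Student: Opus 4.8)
The plan is to treat (a) as a Mourre-type positive-commutator estimate for $H$ with conjugate operator $D_\Sigma$, and then to derive (b) from (a) by a Duhamel argument. For (a), I would first record the virial identity that $D_\Sigma$ provides for $H$: combining $i[H_0,D_\Sigma]=f(H_0)$ (Equation \eqref{gen_com}) with the definition of $\V_{\Sigma,f}$ gives, as a sesquilinear form on $\dom(D_\Sigma)\cap\H^2$,
$$
i[H,D_\Sigma]^a=f(H)-\V_{\Sigma,f},
$$
an element of $\B(\H^2_{-\delta},\H^{-2}_\delta)$ for some $\delta>\12$, by Assumption \ref{difference} together with the improved bound on $[V,D_\Sigma]^a$ noted after Lemma \ref{D_SigmaV}. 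Since $f(u)>0$ for $u>0$ (Definition \ref{F_Sigma}.(i)) and $\eta$ is supported in $(0,\infty)\setminus\sigma_{\rm pp}(H)$, there is a constant $c>0$ with $\eta(H)f(H)\eta(H)\ge c\,\eta(H)^2$, so $D_\Sigma$ obeys a strict Mourre estimate on $\supp\eta$ up to the remainder $\eta(H)\V_{\Sigma,f}\eta(H)$, which lies in $\B(\H)$ because $\eta(H)$ maps $\H^{-2}$ boundedly into $\H^2$ while $\H^2\subset\H^2_{-\delta}$ and $\H^{-2}_\delta\subset\H^{-2}$.

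With this in hand I would run the standard PSS-type propagation argument, which is exactly the scheme generalised from \cite[Lemmas 2.5 \& 2.7]{Jen84}. Inserting an auxiliary cutoff $\eta_1\in C^\infty_0((0,\infty)\setminus\sigma_{\rm pp}(H))$ with $\eta_1\eta=\eta$ (using that $\eta(H)$ commutes with $\e^{-itH}$) reduces the estimate to a two-sidedly energy-localised propagator. For $u$ in a suitable dense set I would then study $F(t):=\big\langle\e^{-itH}\eta(H)u,D_\Sigma^{(\lambda)}\e^{-itH}\eta(H)u\big\rangle$, where $D_\Sigma^{(\lambda)}:=D_\Sigma R_\lambda=i\lambda(1-R_\lambda)$ is the bounded regularisation of $D_\Sigma$ furnished by Lemma \ref{R_lambda}.(b); this avoids domain questions for the unbounded $D_\Sigma$ along the evolution. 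Differentiating $F$ and using the Mourre estimate gives a differential inequality forcing the state out to large values of $D_\Sigma$ as $|t|\to\infty$; removing the regularisation via $\lim_{|\lambda|\to\infty}(1-R_\lambda)=0$ (Lemma \ref{R_lambda}.(b)) then yields $\big\|(D_\Sigma+i)^{-1}\e^{-itH}\eta(H)(D_\Sigma+i)^{-1}\big\|\le{\rm Const.}\,\<t\>^{-1}$. The extra decay $\delta>\12$ of $\V_{\Sigma,f}$ is precisely what upgrades mere boundedness of this product to the sharp power $\<t\>^{-1}$.

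For (b) I would deduce boundedness of $[D_\Sigma,W_\pm\eta(H_0)]$ from (a). Writing $W_\pm\eta(H_0)=\slim_{t\to\pm\infty}\e^{itH}\e^{-itH_0}\eta(H_0)$ and using Duhamel's formula $\e^{itH}\e^{-itH_0}=1+i\int_0^t\e^{isH}V\e^{-isH_0}\,\d s$, I would commute $D_\Sigma$ through and represent the commutator as a strongly convergent $s$-integral. The dangerous contributions are the ones growing linearly in $s$, coming from $[D_\Sigma,\e^{isH}]$ (which is $O(s)$ by Lemma \ref{D_SigmaV}.(b)) and from $[D_\Sigma,\e^{-isH_0}]=sf(H_0)\e^{-isH_0}$ (Lemma \ref{D_Sigma}.(b)). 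One checks, using the intertwining $f(H)W_\pm=W_\pm f(H_0)$ and the conjugation formula \eqref{group_com}, that these $O(s)$ pieces reassemble into $-s\,\e^{isH}[f(H)-f(H_0)]\e^{-isH_0}$ plus lower-order terms governed by $\V_{\Sigma,f}$ and $V$; the first is integrable in $s$ by the extra decay $\rho>1$ of $f(H)-f(H_0)$ from Assumption \ref{difference}, while the remainders are controlled by the decay estimate (a) and the $\lone$-integrability of Lemma \ref{cond_lone}, so the integral converges in $\B(\H)$. The statement for $W_\pm^*\eta(H)$ follows by taking adjoints, using $W_\pm^*\eta(H)=\eta(H_0)W_\pm^*$ and the essential selfadjointness of $D_\Sigma$ (Lemma \ref{D_Sigma}.(a)); the asserted invariance of $\dom(D_\Sigma)$ is then immediate from boundedness of the commutators.

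The main obstacle is part (a): $D_\Sigma$ is an \emph{exact} conjugate operator for $H_0$ but only an approximate one for $H$, the defect being the generalised virial $\V_{\Sigma,f}$. Controlling this defect well enough to obtain the sharp $\<t\>^{-1}$ decay — rather than mere boundedness — is what forces Assumption \ref{difference} (equivalently, the short-range decay), since one needs $\V_{\Sigma,f}\in\B(\H^2_{-\delta},\H^{-2}_\delta)$ with $\delta>\12$ for the relevant factor to be $H$-smooth. A secondary difficulty, already absorbed into Lemmas \ref{R_lambda} and \ref{D_SigmaV}, is the momentum-space singularity of $\nabla G_\Sigma$ at the origin (Equation \eqref{partial_alpha}); this is why one works throughout with the regularised field $F_\Sigma$ rather than $\nabla G_\Sigma$ itself, and why the mollified resolvent $R_\lambda$ is needed to make the differential-inequality argument in (a) rigorous.
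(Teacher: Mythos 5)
Your route for part (a) has a genuine gap at its central step. What your differential inequality actually delivers, after integrating $F'(t)=\<\psi_t,i[H,D_\Sigma R_\lambda]\psi_t\>$ and using $f(H)\ge c>0$ on the energy support together with local smoothness of $\V_{\Sigma,f}$, is linear growth of an expectation value, $\<\psi_t,D_\Sigma\psi_t\>\ge ct\,\|\eta(H)u\|^2-{\rm Const.}\,\|u\|^2$. This does \emph{not} imply the operator-norm bound $\|(D_\Sigma+i)^{-1}\e^{-itH}\eta(H)(D_\Sigma+i)^{-1}\|\le{\rm Const.}\<t\>^{-1}$: a state can have linearly growing $D_\Sigma$-expectation while keeping a non-decaying component localised near $D_\Sigma\approx0$. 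Converting a positive-commutator estimate into pointwise-in-time decay is exactly the content of minimal-velocity estimates (Sigal--Soffer, Hunziker--Sigal--Soffer), and those require control of the second commutator $[[H,D_\Sigma],D_\Sigma]$, i.e. $C^2$-type regularity of $H$ with respect to $D_\Sigma$, which you never verify and which the paper's lemmas do not provide. (Note also that positivity ``up to the bounded remainder $\eta(H)\V_{\Sigma,f}\eta(H)$'' is not a strict Mourre estimate: a bounded, non-small, non-compact error cannot be absorbed.) The paper uses no positivity at all: it starts from the exact identity \eqref{erg}, $[D_\Sigma,\e^{-itH}]^a=t\e^{-itH}f(H)-\e^{-itH}\int_0^t\d s\,\e^{isH}\V_{\Sigma,f}\e^{-isH}$, uses only the invertibility of $f$ on $\supp\eta$ through $\zeta(H)=f(H)^{-1}\vartheta(H)$ to write $\e^{-itH}\eta(H)$ as $\frac1t$ times the two right-hand terms, and sandwiches with $(D_\Sigma+i)^{-1}$; uniform boundedness of the sandwiched terms then follows from Lemma \ref{D_SigmaV}.(b)--(c) and from Cauchy--Schwarz plus local $H$-smoothness of $\V_{\Sigma,f}$ for the $s$-integral. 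No differential inequality and no second commutators enter.

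Part (b) as you sketch it also fails, at the linearly growing terms. In the Duhamel expansion the $O(s)$ pieces are $-s\e^{isH}\{f(H)V-Vf(H_0)\}\e^{-isH_0}$, not $-s\e^{isH}\{f(H)-f(H_0)\}\e^{-isH_0}$; but the decisive problem is independent of this bookkeeping: no available estimate makes $\int\d s\,s\,(\cdots)$ absolutely convergent with a bound ${\rm Const.}\|\varphi\|\cdot\|\psi\|$. Local smoothness gives only square-integrability in $s$ of the weighted factors, so the integrand is $\lone$ but $s$ times it is not; pointwise free-propagation decay requires weighted vectors; and even applying your estimate (a) to both evolutions yields $s\<s\>^{-2}$, which is still not integrable. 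Invoking Lemma \ref{cond_lone} cannot repair this: its bounds involve $\|\<Q\>^s\varphi\|$ with $s>2$, hence hold only for vectors in $\D_s$ and produce constants depending on weighted norms --- that can never give the form bound on all of $\dom(D_\Sigma)$ demanded in (b), nor convergence of the integral in $\B(\H)$. This is precisely why the paper organises the proof differently: there the linear-in-$t$ term appears as a \emph{boundary} term, $t\<\varphi,\eta(H)\e^{itH}\{f(H)-f(H_0)\}\e^{-itH_0}\zeta(H_0)\psi\>$, not as an integrand; since the commutator form converges as $t\to\infty$, it suffices that this term vanish along \emph{some} sequence $t_n\to\infty$, and that follows from $\liminf_{t\to\infty}t\,g(t)=0$ for the integrable local-decay product $g$, using local $H$- and $H_0$-smoothness of $f(H)-f(H_0)$ (Assumption \ref{difference}). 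Your final observation --- obtaining $[D_\Sigma,W_\pm^*\eta(H)]$ from $[D_\Sigma,W_\pm\eta(H_0)]$ by adjoints and the intertwining relation --- is correct, and in fact slightly simpler than the paper's repetition of the argument for $W_\pm^*\eta(H)$.
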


\begin{proof} (a) Since the case $t=0$ is trivial, we can suppose $t\ne0$. Let
$\varphi,\psi\in\dom(D_\Sigma)\cap\H^2$, then
$$
\<D_\Sigma\varphi,\e^{-itH}\psi\>-\<\varphi,\e^{-itH}D_\Sigma\psi\>=\lim_{\lambda\to\infty}
\int_0^t\d s\,\big\langle\varphi,\e^{i(s-t)H}i[H,D_\Sigma R_\lambda]\e^{-isH}\psi\big\rangle
$$
due to Lemma \ref{R_lambda}.(b). By using Lemma \ref{D_Sigma}.(b) and Lemma \ref{D_SigmaV}.(b) we
get in $\B(\H^2,\H^{-2})$ the equalities
\begin{align}
[D_\Sigma,\e^{-itH}]^a&=\e^{-itH}\int_0^t\d s\,\e^{isH}i[H,D_\Sigma]^a\e^{-isH}\nonumber\\
&=t\e^{-itH}f(H)-\e^{-itH}\int_0^t\d s\,\e^{isH}\V_{\Sigma,f}\e^{-isH}.\label{erg}
\end{align}
Take $\eta,\vartheta\in C^\infty_0((0,\infty)\setminus\sigma_{\rm pp}(H))$ with $\vartheta$
identically one on the support of $\eta$, and let
$\zeta\in C^\infty_0((0,\infty)\setminus\sigma_{\rm pp}(H))$ be defined by
$\zeta(u):=f(u)^{-1}\vartheta(u)$. Then $\eta(H)=f(H)\zeta(H)\eta(H)$ and
\begin{align*}
\e^{-itH}\eta(H)&=\frac1t\zeta(H)t\e^{-itH}f(H)\eta(H)\\
&=\frac1t\zeta(H)\e^{-itH}\int_0^t\d s\,\e^{isH}\V_{\Sigma,f}\e^{-isH}\eta(H)
+\frac1t\zeta(H)[D_\Sigma,\e^{-itH}]^a\eta(H).
\end{align*}
Since $\V_{\Sigma,f}$ belongs to $\B(\H^2_{-\delta},\H^{-2}_\delta)$ for some $\delta>\12$, a
local $H$-smoothness argument shows that the first term is bounded by ${\rm Const.}|t|^{-1}$ in
$\H$. Furthermore by using Lemma \ref{D_SigmaV}.(c) one shows that $(D_\Sigma+i)^{-1}\zeta(H)[D_\Sigma,\e^{-itH}]^a\eta(H)(D_\Sigma+i)^{-1}$ is bounded in $\H$ by
a constant independent of $t$. Thus
$$
\left\|(D_\Sigma+i)^{-1}\e^{-itH}\eta(H)(D_\Sigma+i)^{-1}\right\|\le{\rm Const.}\;\!|t|^{-1},
$$
and the claim follows.

(b) Consider first $[D_\Sigma,W_+\eta(H_0)]$. Given
$\eta\in C^\infty_0((0,\infty)\setminus\sigma_{\rm pp}(H))$ let
$\zeta\in C^\infty_0((0,\infty)\setminus\sigma_{\rm pp}(H))$ be identically one on the support of
$\eta$. Due to Lemma \ref{D_SigmaV}.(c) one has on $\dom(D_\Sigma)$
\begin{align*}
&[D_\Sigma,\zeta(H)\e^{itH}\eta(H)\e^{-itH_0}\zeta(H_0)]\\
&=\zeta(H)[D_\Sigma,\e^{itH}\eta(H)\e^{-itH_0}]\zeta(H_0)
+[D_\Sigma,\zeta(H)]\e^{itH}\eta(H)\e^{-itH_0}\zeta(H_0)\\
&\quad+\zeta(H)\e^{itH}\eta(H)\e^{-itH_0}[D_\Sigma,\zeta(H_0)],
\end{align*}
and the last two operators belong to $\B(H)$ with norm uniformly bounded in $t$. Let
$\varphi,\psi\in\dom(D_\Sigma)$. Using Lemma \ref{D_Sigma}.(b) and Lemma \ref{R_lambda}.(b) one
gets for the first operator the following equalities
\begin{align*}
&\<\varphi,\zeta(H)[D_\Sigma,\e^{itH}\eta(H)\e^{-itH_0}]\zeta(H_0)\psi\>\\
&=\<\varphi,\zeta(H)[D_\Sigma,\e^{itH}]\eta(H)\e^{-itH_0}\zeta(H_0)\psi\>\\
&\quad+\<\varphi,\zeta(H)\e^{itH}[D_\Sigma,\eta(H)]\e^{-itH_0}\zeta(H_0)\psi\>\\
&\quad+\<\varphi,\zeta(H)\e^{itH}\eta(H)[D_\Sigma,\e^{-itH_0}]\zeta(H_0)\psi\>\\
&=-\int_0^t\d s\,\big\langle\varphi,\zeta(H)\e^{i(t-s)H}i[H,D_\Sigma]^a
\e^{isH}\eta(H)\e^{-itH_0}\zeta(H_0)\big\rangle\\
&\quad+\<\varphi,\zeta(H)\e^{itH}[D_\Sigma,\eta(H)]\e^{-itH_0}\zeta(H_0)\psi\>\\
&\quad+t\<\varphi,\zeta(H)\e^{itH}\eta(H)\e^{-itH_0}f(H_0)\zeta(H_0)\psi\>\\
&=\int_0^t\d s\,\big\langle\varphi,\zeta(H)\e^{i(t-s)H}\V_{\Sigma,f}
\e^{isH}\eta(H)\e^{-itH_0}\zeta(H_0)\big\rangle\\
&\quad+\<\varphi,\zeta(H)\e^{itH}[D_\Sigma,\eta(H)]\e^{-itH_0}\zeta(H_0)\psi\>\\
&\quad-t\<\varphi,\eta(H)\e^{itH}\{f(H)-f(H_0)\}\e^{-itH_0}\zeta(H_0)\psi\>.
\end{align*}
The first two terms are bounded by $\textsc c\!\;\|\varphi\|\cdot\|\psi\|$ with $\textsc c>0$
independent of $\varphi,\psi$ and $t$ (use the local $H$-smoothness of $\V_{\Sigma,f}$ for the
first term). Furthermore due to the local $H$- and $H_0$-smoothness of $f(H)-f(H_0)$ one can find
a sequence $t_n\to\infty$ as $n\to\infty$ such that
$$
\lim_{n\to\infty}
t_n\<\varphi,\eta(H)\e^{it_nH}\{f(H)-f(H_0)\}\e^{-it_nH_0}\zeta(H_0)\psi\>=0.
$$
This together with the previous remarks implies that
$$
\lim_{n\to\infty}\<\varphi,[D_\Sigma,\zeta(H)\e^{it_nH}\eta(H)\e^{-it_nH_0}\zeta(H_0)]\psi\>
\le\textsc c'\|\varphi\|\cdot\|\psi\|,
$$
with $\textsc c'>0$ independent of $\varphi,\psi$ and $t$. Thus using the intertwining relation
and the identity $\eta(H_0)=\zeta(H_0)\eta(H_0)\zeta(H_0)$ one finds that
\begin{align*}
&\big|\<D_\Sigma\varphi,W_+\eta(H_0)\psi\>-\<\varphi,W_+\eta(H_0)\psi\>\big|\\
&=\lim_{n\to\infty}
\big|\<\varphi,[D_\Sigma,\zeta(H)\e^{it_nH}\eta(H)\e^{-it_nH_0}\zeta(H_0)]\psi\>\big|\\
&\le\textsc c'\|\varphi\|\cdot\|\psi\|.
\end{align*}
This proves the result for $[D_\Sigma,W_+\eta(H_0)]$. A similar proof holds for
$[D_\Sigma,W_-\eta(H_0)]$. Since the wave operators are complete, one has $W_\pm^*\eta(H)=\textrm{s-}\lim_{t\to\pm\infty}\e^{itH_0}\e^{-itH}\eta(H)$, and an analogous
proof can be given for the operators $[D_\Sigma,W_\pm^*\eta(H)]$.
\end{proof}

\begin{Remark}\label{bibo}
In the case $\Sigma=\mathcal B$ the requirements of Definition \ref{F_Sigma} and Assumption
\ref{difference} are satisfied by many functions $f$. A natural choice is $f(u)=2u$, $u\in\R$,
since in such a case $f(H)-f(H_0)=2V\in\B\big(\H^2_t,\H_{t+\kappa}\big)$, $t\in\R$, $\kappa>1$. If
$\Sigma$ is not spherical there are still many appropriate choices for $f$. For instance if
$\gamma>0$, then the function $f(u)=2(u^2+\gamma)^{-1}u^3$, $u\in\R$, satisfies all the desired
requirements. Indeed in such a case one has on $\H^2$ the following equalities
\begin{align*}
&f(H)-f(H_0)\\
&=2V-2\gamma\big[(H^2+\gamma)^{-1}H-(H_0^2+\gamma)^{-1}H_0\big]\\
&=2V-2\gamma(H^2+\gamma)^{-1}V+2\gamma(H^2+\gamma)^{-1}(H_0V+VH_0+V^2)(H_0^2+\gamma)^{-1}H_0,
\end{align*}
and thus $f(H)-f(H_0)$ also extends to an element of $\B\big(\H^2_t,\H_{t+\kappa}\big)$, $t\in\R$,
$\kappa>1$, due to Lemma \ref{Hminus} and the assumptions on $V$.
\end{Remark}

Next Theorem provides a rigorous meaning to the anisotropic Lavine's formula \eqref{pizza_1}.

\begin{Theorem}\label{lavine}
Let $\Sigma$ satisfy Assumption \ref{Sigma}. Let $V$ satisfy Assumption \ref{potential} with
$\kappa>4$. Suppose that Assumption \ref{difference} is verified. Then one has for each
$\varphi\in\D_s$ with $s>2$
\begin{equation}\label{new_Lavine}
\lim_{r\to\infty}\tau_r(\varphi)
=\int_{-\infty}^\infty\d s\,\big\langle \e^{-isH}W_-f(H_0)^{-1/2}\varphi,
\V_{\Sigma,f}\e^{-isH}W_-f(H_0)^{-1/2}\varphi\big\rangle_{2,-2},
\end{equation}
where $\<\;\!\cdot\;\!,\;\!\cdot\;\!\>_{\scriptscriptstyle2,-2}:\H^2\times\H^{-2}\to\C$ is the
anti-duality map between $\H^2$ and $\H^{-2}$.
\end{Theorem}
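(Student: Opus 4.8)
The plan is to start from Theorem~\ref{time_delay} and to turn the commutator $S^*[D_\Sigma,S]$ into the time integral by a telescoping argument along the scattering trajectory. Set $\phi:=f(H_0)^{-1/2}\varphi$ and $\psi(s):=\e^{-isH}W_-\phi$. Since $S$ is unitary, $S^*[D_\Sigma,S]=S^*D_\Sigma S-D_\Sigma$, so Theorem~\ref{time_delay} reads
\begin{equation*}
\lim_{r\to\infty}\tau_r(\varphi)=\langle\phi,D_\Sigma\phi\rangle-\langle S\phi,D_\Sigma S\phi\rangle,
\end{equation*}
and it is enough to identify the right-hand side with the integral in \eqref{new_Lavine}. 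First I would collect the facts that make every term meaningful. As $\varphi\in\D_s$ is localised in energy away from $0$, the vector $\phi$ is again of this type, whence $\phi,S\phi\in\dom(D_\Sigma)$; by Lemma~\ref{giantcoucou}.(b) and Lemma~\ref{D_SigmaV}.(c) the vectors $W_-\phi$ and $\psi(s)$ stay in $\dom(D_\Sigma)\cap\H^2$. I would also check that $s\mapsto\langle\psi(s),\V_{\Sigma,f}\psi(s)\rangle_{2,-2}$ is integrable: under Assumption~\ref{difference} and Assumption~\ref{potential} with $\kappa>2$ one has $\V_{\Sigma,f}\in\B(\H^2_{-\delta},\H^{-2}_\delta)$ for some $\delta>\12$ (the remark following Lemma~\ref{D_SigmaV}), so the integrand is dominated by $\|\langle Q\rangle^{-\delta}\psi(s)\|_{\H^2}^2$, which lies in $\lone(\R,\d s)$ by the local $H$-smoothness of $\langle Q\rangle^{-\delta}$ on the energy support of $\varphi$.

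The heart of the argument is a differential identity. Using $\V_{\Sigma,f}=f(H)-i[H,D_\Sigma]^a$, I would compute on $\dom(D_\Sigma)\cap\H^2$
\begin{equation*}
\tfrac{\d}{\d s}\langle\psi(s),D_\Sigma\psi(s)\rangle=\langle\psi(s),i[H,D_\Sigma]^a\psi(s)\rangle=\langle\psi(s),f(H)\psi(s)\rangle-\langle\psi(s),\V_{\Sigma,f}\psi(s)\rangle.
\end{equation*}
By the intertwining relation $f(H)W_-=W_-f(H_0)$ and the isometry of $W_-$, the term $\langle\psi(s),f(H)\psi(s)\rangle=\langle\phi,f(H_0)\phi\rangle$ does not depend on $s$; moreover $[S,H_0]=0$ yields $\langle S\phi,f(H_0)S\phi\rangle=\langle\phi,f(H_0)\phi\rangle$ as well. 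On the free side, Formula~\eqref{group_com} (equivalently \eqref{gen_com}) gives, for $\chi\in\{\phi,S\phi\}$,
\begin{equation*}
\tfrac{\d}{\d s}\langle\e^{-isH_0}\chi,D_\Sigma\e^{-isH_0}\chi\rangle=\langle\chi,f(H_0)\chi\rangle=\langle\phi,f(H_0)\phi\rangle.
\end{equation*}
Subtracting the two, the constant $\langle\phi,f(H_0)\phi\rangle$ cancels, so that on $(0,\infty)$
\begin{equation*}
\langle\psi(s),\V_{\Sigma,f}\psi(s)\rangle=\tfrac{\d}{\d s}\big[\langle\e^{-isH_0}S\phi,D_\Sigma\e^{-isH_0}S\phi\rangle-\langle\psi(s),D_\Sigma\psi(s)\rangle\big],
\end{equation*}
and symmetrically on $(-\infty,0)$ with $S\phi$ replaced by $\phi$.

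Next I would integrate these two half-line identities and read off the endpoints. At $s=0$ one has $\e^{-isH_0}\chi|_{s=0}=\chi$ and $\psi(0)=W_-\phi$, so the two interior endpoints contribute $\langle\phi,D_\Sigma\phi\rangle-\langle S\phi,D_\Sigma S\phi\rangle$, the two copies of $\langle W_-\phi,D_\Sigma W_-\phi\rangle$ cancelling between the halves. It remains to show that the outer boundary terms vanish. Using the intertwining relation together with completeness, $W_+S\phi=W_-\phi$, one has $\psi(s)=W_+\e^{-isH_0}S\phi$, so the boundary term at $s=T$ equals
\begin{equation*}
\langle\e^{-iTH_0}S\phi,D_\Sigma\e^{-iTH_0}S\phi\rangle-\langle W_+\e^{-iTH_0}S\phi,D_\Sigma W_+\e^{-iTH_0}S\phi\rangle,
\end{equation*}
whose two arguments differ by $(W_+-1)\e^{-iTH_0}S\phi$.

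The main obstacle is precisely the control of this boundary term, because Formula~\eqref{group_com} shows that $D_\Sigma\e^{-iTH_0}S\phi$ grows linearly in $T$, so the smallness of $\|(W_+-1)\e^{-iTH_0}S\phi\|$ must beat that growth. To overcome it I would use the $\lone$ bound of Lemma~\ref{S_mapping}: since $T\mapsto\|(W_+-1)\e^{-iTH_0}S\phi\|$ lies in $\lone(\R_+)$, there is a sequence $T_n\to\infty$ with $T_n\|(W_+-1)\e^{-iT_nH_0}S\phi\|\to0$, which makes the boundary term vanish along $T_n$; the argument at $-\infty$ is identical, using $\|(W_--1)\e^{-isH_0}\phi\|\in\lone(\R_-)$. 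Because the integrand is already known to be integrable, $\int_0^{T_n}$ converges to $\int_0^\infty$ along this subsequence, and combining the two half-lines gives
\begin{equation*}
\int_{-\infty}^\infty\d s\,\langle\psi(s),\V_{\Sigma,f}\psi(s)\rangle_{2,-2}=\langle\phi,D_\Sigma\phi\rangle-\langle S\phi,D_\Sigma S\phi\rangle,
\end{equation*}
which together with the reduction of the first paragraph is the desired Formula~\eqref{new_Lavine}.
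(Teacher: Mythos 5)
Your overall route is the paper's: with your notation $\phi:=f(H_0)^{-1/2}\varphi$, $\psi(s):=\e^{-isH}W_-\phi$, you reduce via Theorem \ref{time_delay} to the identity
\begin{equation*}
\<\phi,D_\Sigma\phi\>-\<S\phi,D_\Sigma S\phi\>
=\int_{-\infty}^\infty\d s\,\big\langle\psi(s),\V_{\Sigma,f}\psi(s)\big\rangle_{2,-2},
\end{equation*}
prove it by integrating the commutator identity $i[H,D_\Sigma]^a=f(H)-\V_{\Sigma,f}$ along the evolution, and kill the endpoint terms at $s=\pm\infty$ along a subsequence. The paper does exactly this (its Formula \eqref{erg} is the integrated form of your differential identity; note that a rigorous derivation requires the mollifiers $R_\lambda$ of Lemma \ref{R_lambda}, not a formal Leibniz rule). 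The one genuinely different ingredient is your estimate at infinity: you compare $\psi(s)$ with the free evolutions of $\phi$ and $S\phi$ and invoke the $\lone$ bounds of Lemma \ref{S_mapping}, whereas the paper uses the local $H$-smoothness of $f(H)-f(H_0)$ together with the uniform bound of its step (i).

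As written, however, your boundary step has a genuine gap. With $u_T:=\e^{-iTH_0}S\phi$ and $v_T:=\psi(T)=W_+u_T$, the boundary term is a difference of quadratic forms,
\begin{equation*}
\<u_T,D_\Sigma u_T\>-\<v_T,D_\Sigma v_T\>
=\<u_T-v_T,D_\Sigma u_T\>+\<D_\Sigma v_T,u_T-v_T\>,
\end{equation*}
so the smallness of $\|u_T-v_T\|=\|(W_+-1)\e^{-iTH_0}S\phi\|$ must beat the growth of \emph{both} $\|D_\Sigma u_T\|$ and $\|D_\Sigma v_T\|$. You control only the free side, $\|D_\Sigma u_T\|=O(T)$ via \eqref{group_com}; the interacting side $\|D_\Sigma\psi(T)\|$ is never estimated, and nothing you cite provides it: Lemma \ref{D_SigmaV}.(b) only bounds $[D_\Sigma,\e^{-itH}]^a$ in $\B(\H^2,\H^{-2})$, and Lemma \ref{D_SigmaV}.(c) concerns $\eta(H)$ with $\eta\in C^\infty_0$, not $\e^{-itH}$ --- so it also does not justify your claim that $\psi(s)\in\dom(D_\Sigma)$. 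This missing control is precisely what step (i) of the paper's proof supplies: its uniform bound $\|D_\Sigma\e^{itH_0}\e^{-itH}\psi\|\le{\rm Const.}$ is, by \eqref{group_com}, the statement that $\|[D_\Sigma-tf(H_0)]\psi(t)\|$ stays bounded. The gap is fillable with the paper's own tools: choose $\eta\in C^\infty_0((0,\infty)\setminus\sigma_{\rm pp}(H))$ with $\eta(H_0)S\phi=S\phi$, write $\psi(T)=W_+\eta(H_0)u_T$ (intertwining plus completeness), and apply Lemma \ref{giantcoucou}.(b) to get both $\psi(T)\in\dom(D_\Sigma)$ and
\begin{equation*}
\|D_\Sigma\psi(T)\|
\le\big\|[D_\Sigma,W_+\eta(H_0)]\big\|\,\|u_T\|+\|D_\Sigma u_T\|
\le{\rm Const.}\,(1+T),
\end{equation*}
after which your sequence $T_n$ with $T_n\|(W_+-1)\e^{-iT_nH_0}S\phi\|\to0$ does make the boundary term vanish (same remark at $-\infty$, using $\psi(s)=W_-\e^{-isH_0}\phi$). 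You should also check that $\phi$ and $S\phi$ lie in $\D_s$-type sets (so that Lemma \ref{S_mapping} and \eqref{group_com} apply to them), which holds because $f(H_0)^{-1/2}\eta(H_0)$ is a smooth compactly supported function of $H_0$ preserving $\dom(\<Q\>^s)$. With these repairs your argument closes, and it then constitutes a genuine alternative to the paper's step (i), trading that uniform estimate for the propagation bounds of Lemma \ref{S_mapping}.
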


\begin{proof}
(i) Set $W(t):=\e^{itH}\e^{-itH_0}$, and let $\psi:=\eta(H)\widetilde\psi$, where
$\eta\in C^\infty_0((0,\infty)\setminus\sigma_{\rm pp}(H))$ and $\widetilde\psi\in\dom(D_\Sigma)$.
We shall prove that $\|D_\Sigma W(t)^*\psi\|\le\textsc c$, with $\textsc c$ independent of $t$.
Due to Lemma \ref{D_Sigma}.(b) and Lemma \ref{D_SigmaV}.(c) one has
\begin{align}
\|D_\Sigma W(t)^*\psi\|
&=\big\|\e^{-itH_0}D_\Sigma\e^{itH_0}\e^{-itH}\eta(H)(D_\Sigma+i)^{-1}\psi_1\big\|\nonumber\\
&\le|t|\big\|\{f(H)-f(H_0)\}\e^{-itH}\eta(H)(D_\Sigma+i)^{-1}\psi_1\big\|\label{mydoudou}\\
&\quad+\big\|\{D_\Sigma-tf(H)\}\e^{-itH}\eta(H)(D_\Sigma+i)^{-1}\psi_1\big\|\nonumber,
\end{align}
where $\psi\equiv\eta(H)(D_\Sigma+i)^{-1}\psi_1$. Let $z\in\C\setminus\{\sigma(H_0)\cup\sigma(H)\}$
and set $\widetilde\eta(H):=(H-z)^2\eta(H)$. Then Lemmas \ref{D_Sigma}.(a), \ref{Hminus}, and
\ref{giantcoucou}.(a) imply that
\begin{align*}
&|t|\big\|\{f(H)-f(H_0)\}\e^{-itH}\eta(H)(D_\Sigma+i)^{-1}\psi_1\big\|\\
&\le|t|\big\|\{f(H)-f(H_0)\}(H-z)^{-2}(D_\Sigma+i)\big\|\cdot
\big\|(D_\Sigma+i)^{-1}\e^{-itH}\widetilde\eta(H)(D_\Sigma+i)^{-1}\big\|\\
&\le{\rm Const.}
\end{align*}
Calculations similar to those of Lemma \ref{giantcoucou}.(a) show that the second term of
\eqref{mydoudou} is also bounded uniformly in $t$.

(ii) Let $W(t)$ and $\psi$ be as in point (i). Lemma \ref{D_Sigma}.(b), Lemma \ref{D_SigmaV}.(c),
and commutator calculations as in \eqref{erg} lead to
\begin{align*}
\<W(t)^*\psi,D_\Sigma W(t)^*\psi\>
&=\<\psi,\e^{itH}D_\Sigma\e^{-itH}\psi\>-t\<\psi,\e^{itH}f(H_0)\e^{-itH}\psi\>\\
&=\<\psi,D_\Sigma\psi\>
-\int_0^t\d s\,\big\langle\e^{-isH}\psi,\V_{\Sigma,f}\e^{-isH}\psi\big\rangle_{2,-2}\\
&\quad+t\<\psi,\e^{itH}\{f(H)-f(H_0)\}\e^{-itH}\psi\>.\\
\end{align*}
The local $H$-smoothness of $f(H)-f(H_0)$ implies the existence of a sequence $t_n\to\infty$ as
$n\to\infty$ such that
$$
\lim_{n\to\infty}t_n\<\psi,\e^{it_nH}\{f(H)-f(H_0)\}\e^{-it_nH}\psi\>=0.
$$
This together with point (i) and the local $H$-smoothness of $\V_{\Sigma,f}$ implies that
$$
\<W^*_+\psi,D_\Sigma W^*_+\psi\>=\<\psi,D_\Sigma\psi\>
-\int_0^\infty\d s\,\big\langle\e^{-isH}\psi,\V_{\Sigma,f}\e^{-isH}\psi\big\rangle_{2,-2}.
$$
Similarly, one finds
$$
\<W^*_-\psi,D_\Sigma W^*_-\psi\>
=\<\psi,D_\Sigma\psi\>
+\int_{-\infty}^0\d s\,\big\langle\e^{-isH}\psi,\V_{\Sigma,f}\e^{-isH}\psi\big\rangle_{2,-2},
$$
and thus
\begin{equation}\label{globibulga}
\<W^*_+\psi,D_\Sigma W^*_+\psi\>-\<W^*_-\psi,D_\Sigma W^*_-\psi\>=-\int_{-\infty}^\infty\d s\,
\big\langle\e^{-isH}\psi,\V_{\Sigma,f}\e^{-isH}\psi\big\rangle_{2,-2}.
\end{equation}
Let $\varphi\in\D_s$ with $s>2$. Due to Lemma \ref{giantcoucou}.(b) the vector
$W_-f(H_0)^{-1/2}\varphi$ is of the form $\eta(H)\widetilde\psi$, with
$\eta\in C^\infty_0((0,\infty)\setminus\sigma_{\rm pp}(H))$ and $\widetilde\psi\in\dom(D_\Sigma)$.
Thus one can put $\psi=W_-f(H_0)^{-1/2}\varphi$ in Formula \eqref{globibulga}. This gives
\begin{align*}
&\big\langle Sf(H_0)^{-1/2}\varphi,D_\Sigma Sf(H_0)^{-1/2}\varphi\big\rangle
-\big\langle f(H_0)^{-1/2}\varphi,D_\Sigma f(H_0)^{-1/2}\varphi\big\rangle\\
&=-\int_{-\infty}^\infty\d s\,\big\langle\e^{-isH}W_-f(H_0)^{-1/2}\varphi,
\V_{\Sigma,f}\e^{-isH}W_-f(H_0)^{-1/2}\varphi\big\rangle_{2,-2},
\end{align*}
and the claim follows by Theorem \ref{time_delay}.
\end{proof}

\begin{Remark}
Symmetrised time delay and usual time delay are equal when $\Sigma$ is spherical (see Formula
\eqref{sweetie}). Therefore in such a case Formula \eqref{new_Lavine} must reduces to the usual
Lavine's formula. This turns out to be true. Indeed if $\Sigma=\mathcal B$ and $f(u)=2u$, then
$f(H_0)=2H_0$, $\V_{\Sigma,f}$ is equal to the virial $\widetilde V:=2V-i[V,D]^a$, and Formula
\eqref{new_Lavine} takes the usual form
$$
\lim_{r\to\infty}\tau_r(\varphi)
=\int_{-\infty}^\infty\d s\,\big\langle \e^{-isH}W_-H_0^{-1/2}\varphi,
\big\{V-\textstyle\frac i2[V,D]^a\big\}\e^{-isH}W_-H_0^{-1/2}\varphi\big\rangle_{2,-2}.
$$
\end{Remark}

In the following remark we give some insight on the meaning of Formula \eqref{new_Lavine} when
$\Sigma$ is not spherical. Then we present two simple examples as an illustration.

\begin{Remark}\label{non_isotropic}
Let $V$ satisfy Assumption \ref{potential} with $\kappa>4$, and choose a set
$\Sigma\ne\mathcal B$ satisfying Assumption \ref{Sigma}. In such a case the function
$f_\gamma(u):=2(u^2+\gamma)^{-1}u^3$, $u\in\R$, fulfills the requirements of Definition
\ref{F_Sigma} and Assumption \ref{difference} (see Remark \ref{bibo}). Thus Theorem \ref{lavine}
applies, and one has for $\varphi\in\D_s$ with $s>2$
\begin{align*}
&\lim_{r\to\infty}\tau_r(\varphi)\\
&=\lim_{\gamma\searrow0}\int_{-\infty}^\infty\d s
\,\big\langle\e^{-isH}W_-f_\gamma(H_0)^{-1/2}\varphi,
\V_{\Sigma,f_\gamma}\e^{-isH}W_-f_\gamma(H_0)^{-1/2}\varphi\big\rangle_{2,-2}.
\end{align*}
Now $f_\gamma(H_0)\varphi$ converges in norm to $2H_0\varphi$ as $\gamma\searrow0$, so formally
one gets the identity
\begin{equation}\label{sign}
\lim_{r\to\infty}\tau_r(\varphi)
=\12\int_{-\infty}^\infty\d s\,\big\langle \e^{-isH}W_-H_0^{-1/2}\varphi,
\V_\Sigma\e^{-isH}W_-H_0^{-1/2}\varphi\big\rangle_{2,-2},
\end{equation}
where
$$
\V_\Sigma:=2V-i[V,D_\Sigma]^a
=2V-{\textstyle\frac i2}\sum_{j\le d}\big\{\big[V,{F_\Sigma}_j(P)\big]\cdot Q_j
+Q_j\cdot\big[V,{F_\Sigma}_j(P)\big]\big\},
$$
and
\begin{equation}\label{F_Sigma_P2}
{F_\Sigma}_j(P)=-(\partial_jG_\Sigma)(P)P^2.
\end{equation}

The pseudodifferential operator $\V_\Sigma$ generalises the virial $\widetilde V$ of the
isotropic case. It furnish a measure of the variation of the potential $V$ along the vector field
$-F_\Sigma$, which is orthogonal to the hypersurfaces $\partial\Sigma_r$ due to Remark
\ref{orthogonal}. Therefore Formula \eqref{sign} establishes a relation between symmetrised time
delay and the variation of $V$ along $-F_\Sigma$. Moreover one can rewrite $\V_\Sigma$ as
\begin{align*}
\V_\Sigma
&=\widetilde V+i[V,D-D_\Sigma]^a\\
&=\widetilde V+{\textstyle\frac i2}\sum_{j\le d}
\big\{\big[V,\big(P_j-{F_\Sigma}_j(P)\big)\big]\cdot Q_j
+Q_j\cdot\big[V,\big(P_j-{F_\Sigma}_j(P)\big)\big]\big\}.
\end{align*}
where $P-F_\Sigma(P)$ is orthogonal to $P$ due to Formulas \eqref{F_Sigma_P2} and \eqref{moinsun}.
Consequently there are two distinct contributions to symmetrised time delay. The first one is
standard; it is associated to the term $\widetilde V$, and it is due to the variation of the
potential $V$ along the radial coordinate (see \cite[Sec. 6]{Lav74} for details). The second one
is new; it is associated to the term $i[V,D-D_\Sigma]^a$ and it is due to the variation of $V$
along the vector field $x\mapsto x-F_\Sigma(x)$.
\end{Remark}

\begin{Example}[Examples in $\R^2$]
Set $d=2$, suppose that $V$ satisfies Assumption \ref{potential} with $\kappa>4$, and let $\Sigma$
be equal to the superellipse $\E:=\big\{(x_1,x_2)\in\R^2\mid x_1^4+x_2^4<1\big\}$. Then one has $G_\E(x)=-\frac14\ln\big(x_1^4+x_2^4\big)$ and
$(\partial_jG_\E)(x)=-x_j^3\big(x_1^4+x_2^4\big)^{-1}$. Thus due to Remark \ref{non_isotropic} the
symmetrised time delay associated to $\E$ is (formally) caracterised by the pseudodifferential
operator
$$
\V_\E
=2V-{\textstyle\frac i2}\sum_{j\le d}\big\{\big[V,{F_\E}_j(P)\big]\cdot Q_j
+Q_j\cdot\big[V,{F_\E}_j(P)\big]\big\},
$$
where ${F_\E}_j(P)=P_j^3P^2\big(P_1^4+P_2^4\big)^{-1}$ (see Figure \ref{F_carre}).

\begin{figure}[htbp]
\begin{center}
\includegraphics[angle=90,scale=0.3]{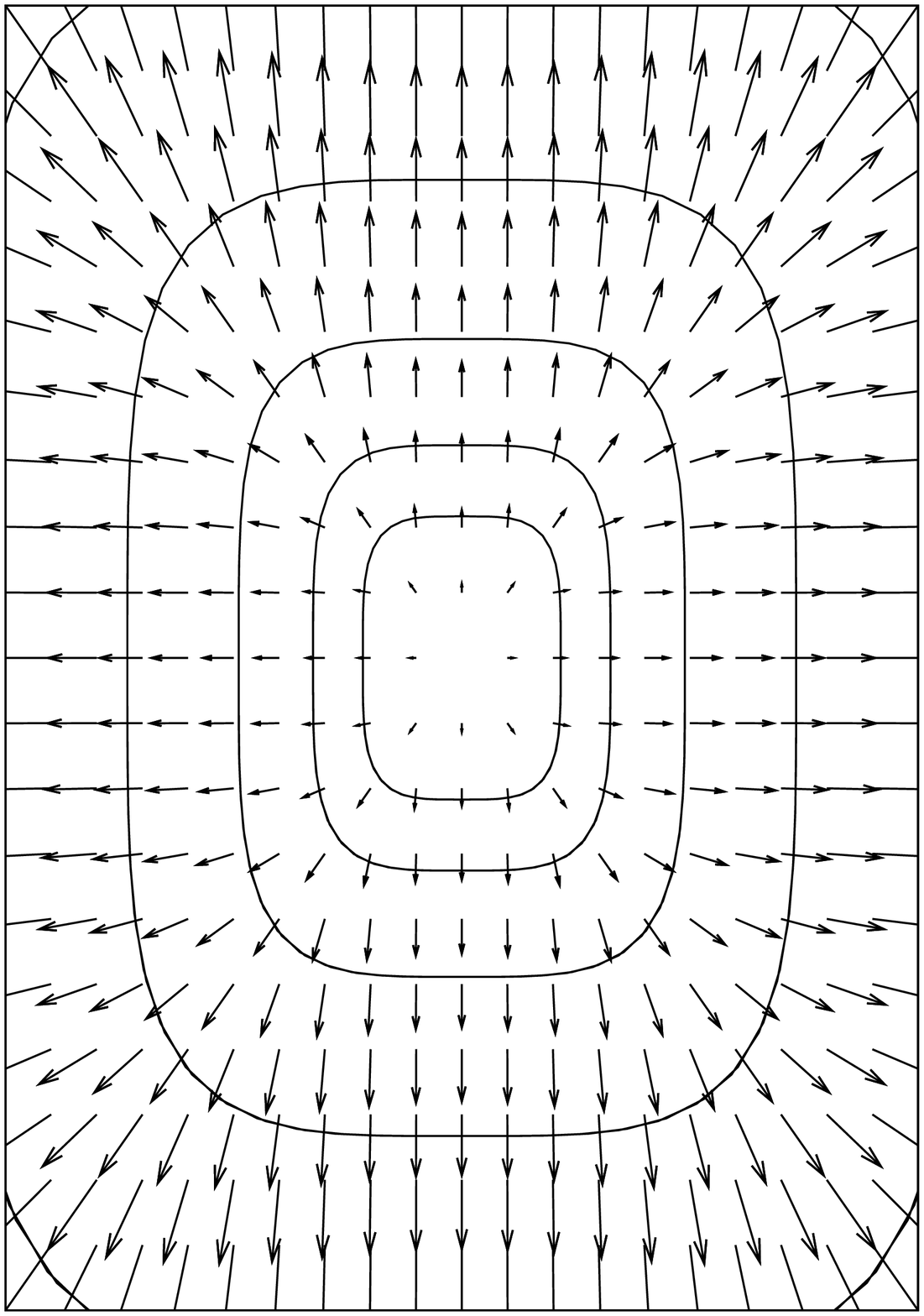}
\caption{\textsf{\footnotesize The vector field $F_\E$ and the sets $\partial\E_r$}}
\label{F_carre}
\vspace{-10pt}
\end{center}
\end{figure}

When $\Sigma$ is equal to the star-type set
$$
\mathcal S:=\Big\{\ell(\theta)\e^{i\theta}\in\R^2\mid
\theta\in[0,2\pi),~\ell(\theta)<\big[\cos(2\theta)^8+\sin(2\theta)^8\big]^{-1/2}\Big\},
$$
one has
$G_{\mathcal S}(x)=\frac72\ln(x_1^2+x_2^2)-\frac12\ln\big[(x_1^2-x_2^2)^8+2^8(x_1x_2)^8\big]$,
and a direct calculation using Formula \eqref{F_Sigma_P2} gives the vector field $F_{\mathcal S}$.
The result is plotted in Figure \ref{F_etoile}.

\begin{figure}[htbp]
\begin{center}
\includegraphics[angle=90,scale=0.3]{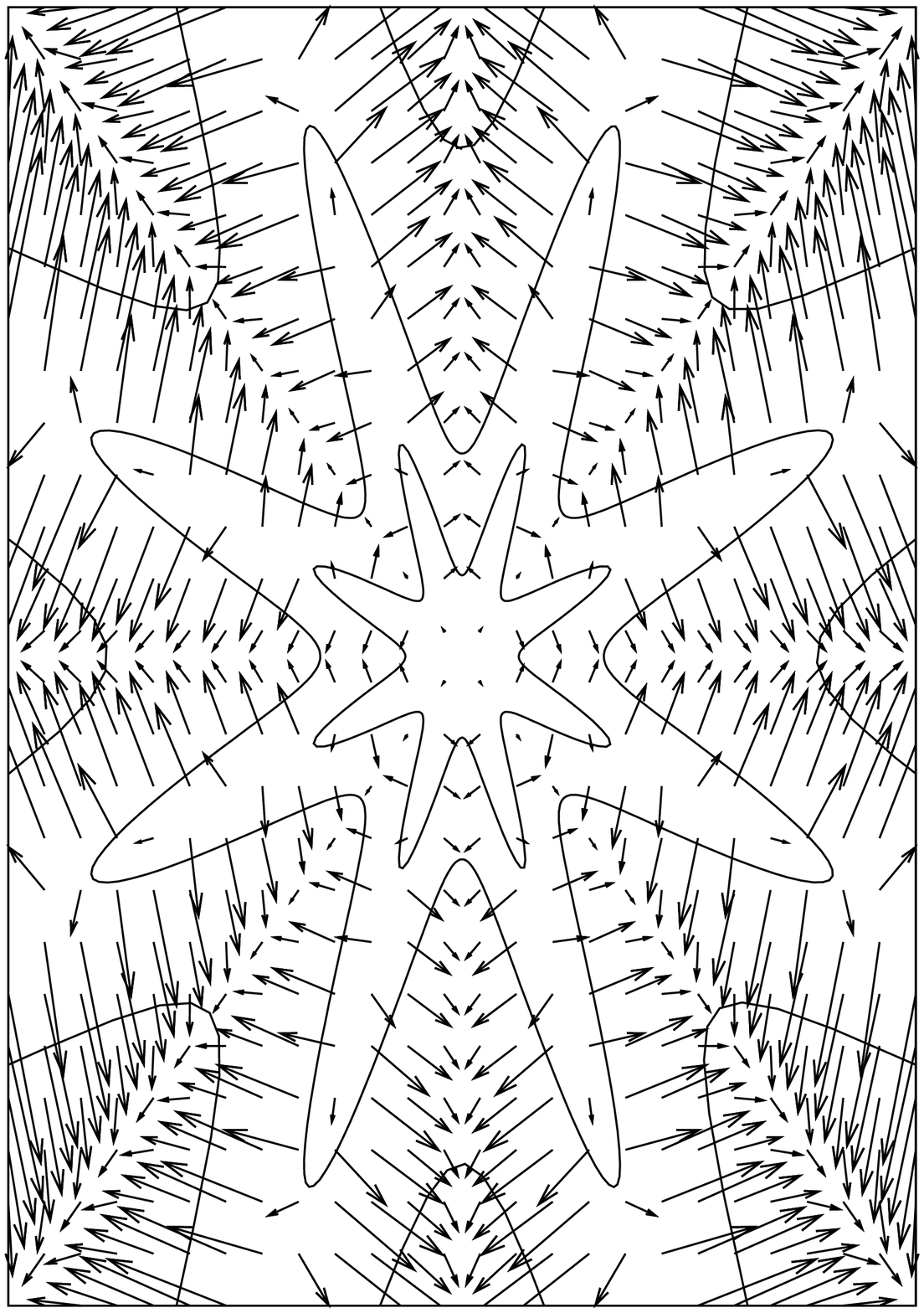}
\caption{\textsf{\footnotesize The vector field $F_{\mathcal S}$ and the sets
$\partial\mathcal S_r$}}
\label{F_etoile}
\vspace{-10pt}
\end{center}
\end{figure}

\end{Example}

\section*{Acknowledgements}

The author thanks the Swiss National Science Foundation and the Department of Mathematics of the
University of Cergy-Pontoise for financial support.

\section*{Appendix}

\begin{proof}[Proof of Lemma \ref{Hminus}]
We first prove that $(H-z)^{-1}$ extends to an element of $\B\big(\H^{-2}_t,\H_t\big)$ for each
$t\ge0$. This clearly holds for $t=0$. Since $(H_0-z)^{-1}\<P\>^2=2+(1+2z)(H_0-z)^{-1}$ one has by
virtue of the second resolvent equation
\begin{align}
&\<Q\>^t(H-z)^{-1}\<P\>^2\<Q\>^{-t}\label{t_iteration}\\
&=2+(1+2z)\<Q\>^t(H_0-z)^{-1}\<Q\>^{-t}\nonumber\\
&\quad-\<Q\>^t(H_0-z)^{-1}(\<Q\>V)\<Q\>^{-t}\cdot\<Q\>^{t-1}(H-z)^{-1}\<P\>^2\<Q\>^{-t}.
\nonumber
\end{align}
If we take $t=1$ we find that each term on the r.h.s. of \eqref{t_iteration} is in $\B(\H)$ due to
\cite[Lemmas 1 \& 2]{ACS}. Hence, by interpolation, $\<Q\>^t(H-z)^{-1}\<P\>^2\<Q\>^{-t}\in\B(\H)$
for each $t\in[0,1]$. Next we choose $t\in(1,2]$ and obtain, by using the preceding result and
\eqref{t_iteration}, that $\<Q\>^t(H-z)^{-1}\<P\>^2\<Q\>^{-t}\in\B(\H)$ for these values of $t$.
By iteration (take $t\in(2,3]$, then $t\in(3,4]$, etc.) one obtains that
$\<Q\>^t(H-z)^{-1}\<P\>^2\<Q\>^{-t}\in\B(\H)$ for each $t>0$. Thus $(H-z)^{-1}$ extends to an
element of $\B\big(\H^{-2}_t,\H_t\big)$ for each $t\ge0$. A similar argument shows that
$(H-z)^{-1}$ also extends to an element of $\B\big(\H^{-2}_t,\H_t\big)$ for each $t<0$. The claim
follows then by using duality and interpolation.
\end{proof}

\begin{proof}[Proof of Lemma \ref{cond_lone}]
For $\varphi\in\D_s$ and $t\in\R$, we have (see the proof of \cite[Lemma 4.6]{Jensen81})
$$
\(W_--1\)\e^{-itH_0}\varphi
=-i\e^{-itH}\int_{-\infty}^t\d\tau\,\e^{i\tau H}V\e^{-i\tau H_0}\varphi,
$$
where the integral is strongly convergent. Hence to prove \eqref{R-} it is enough to show that
\begin{equation}\label{l1 condition}
\int_{-\infty}^{-\delta}\d t\int_{-\infty}^t\d\tau\left\|V\e^{-i\tau H_0}\varphi\right\|<\infty
\end{equation}
for some $\delta>0$. If $\zeta:=\min\{\kappa,s\}$, then $\big\|\<Q\>^\zeta\varphi\big\|<\infty$,
and $V\<P\>^{-2}\<Q\>^\zeta$ belongs to $\B(\H)$ due to Assumption \ref{potential}. Since
$\eta(H_0)\varphi=\varphi$ for some $\eta\in C^\infty_0((0,\infty)\setminus\sigma_{\rm pp}(H))$,
this implies that
$$
\left\|V\e^{-i\tau H_0}\varphi\right\|
\leq\textrm{Const.}\,\big\|\<Q\>^{-\zeta}\<P\>^2\eta(H_0)\e^{-i\tau H_0}\<Q\>^{-\zeta}\big\|.
$$
For each $\varepsilon>0$, it follows from \cite[Lemma 9]{ACS} that there exists a constant
$\textsc c>0$ such that
$\left\|V\e^{-i\tau H_0}\varphi\right\|\leq\textsc c\(1+|\tau|\)^{-\zeta+\varepsilon}$. Since
$\zeta>2$, this implies \eqref{R-}. The proof of \eqref{R+} is similar.
\end{proof}


\begin{thebibliography}{10}

\bibitem{ABG}
W.~O. Amrein, A.~{{Boutet de Monvel}}, and V.~Georgescu.
\newblock {\em ${C_0}$-groups, commutator methods and spectral theory of
  ${N}$-body Hamiltonians}, volume 135 of {\em Progress in Math.}
\newblock Birkh\"auser, Basel, 1996.

\bibitem{ACS}
W.~O. Amrein, M.~B. Cibils, and K.~B. Sinha.
\newblock Configuration space properties of the ${S}$-matrix and time delay in
  potential scattering.
\newblock {\em Ann. Inst. Henri Poincar\'e}, 47:367--382, 1987.

\bibitem{AJ06}
W.~O. Amrein and Ph. Jacquet.
\newblock Time delay for one-dimensional quantum systems with steplike
  potentials.
\newblock {\em Phys. Rev. A}, 022106, 2007.

\bibitem{Bolle/Osborn}
D.~Boll\'e and T.~A. Osborn.
\newblock Time delay in ${N}$-body scattering.
\newblock {\em J. Math. Phys.}, 20:1121--1134, 1979.

\bibitem{CFKS}
H.~L. Cycon, R.~G. Froese, W.~Kirsch, and B.~Simon.
\newblock {\em {S}chr\"odinger operators with applications to quantum mechanics
  and global geometry}.
\newblock Springer-Verlag, Berlin, 1987.

\bibitem{GT07}
C.~G\'erard and R.~{Tiedra de Aldecoa}.
\newblock Generalized definition of time delay in scattering theory.
\newblock preprint on \texttt{math-ph/0609044}.

\bibitem{Jensen81}
A.~Jensen.
\newblock Time-delay in potential scattering theory.
\newblock {\em Commun. Math. Phys.}, 82:435--456, 1981.

\bibitem{Jen83}
A.~Jensen.
\newblock A stationary proof of {L}avine's formula for time-delay.
\newblock {\em Lett. Math. Phys.}, 7(2):137--143, 1983.

\bibitem{Jen84}
A.~Jensen.
\newblock On {L}avine's formula for time-delay.
\newblock {\em Math. Scand.}, 54(2):253--261, 1984.

\bibitem{Jensen/Nakamura}
A.~Jensen and S.~Nakamura.
\newblock Mapping properties of wave and scattering operators for two-body
  {S}chr\"odinger operators.
\newblock {\em Lett. Math. Phys.}, 24:295--305, 1992.

\bibitem{Lav74}
R.~Lavine.
\newblock Commutators and local decay.
\newblock In J.~A. Lavita and J.~P. Marchand, editors, {\em Scattering theory
  in mathematical physics}, pages 141--156, Dordrecht, 1974. D. Reidel.

\bibitem{Martin75}
P.~A. Martin.
\newblock Scattering theory with dissipative interactions and time delay.
\newblock {\em Nuovo Cimento B}, 30:217--238, 1975.

\bibitem{Martin81}
P.~A. Martin.
\newblock Time delay in quantum scattering processes.
\newblock {\em Acta Phys. Austriaca Suppl., XXIII}, pages 157--208, 1981.

\bibitem{MSA92}
A.~Mohapatra, K.~B. Sinha, and W.~O. Amrein.
\newblock Configuration space properties of the scattering operator and time
  delay for potentials decaying like {$\vert x\vert \sp {-\alpha},\,\alpha>1$}.
\newblock {\em Ann. Inst. H. Poincar\'e Phys. Th\'eor.}, 57(1):89--113, 1992.

\bibitem{Nak87}
S.~Nakamura.
\newblock Time-delay and {L}avine's formula.
\newblock {\em Comm. Math. Phys.}, 109(3):397--415, 1987.

\bibitem{Nar80}
H.~Narnhofer.
\newblock Another definition for time delay.
\newblock {\em Phys. Rev. D}, 22(10):2387--2390, 1980.

\bibitem{Nar84}
H.~Narnhofer.
\newblock Time delay and dilation properties in scattering theory.
\newblock {\em J. Math. Phys.}, 25(4):987--991, 1984.

\bibitem{PSS81}
P.~Perry, I.~M. Sigal, and B.~Simon.
\newblock Spectral analysis of {$N$}-body {S}chr\"odinger operators.
\newblock {\em Ann. of Math. (2)}, 114(3):519--567, 1981.

\bibitem{RSIV}
M.~Reed and B.~Simon.
\newblock {\em Methods of modern mathematical physics. {IV}. {A}nalysis of
  operators}.
\newblock Academic Press, New York, 1978.

\bibitem{Sassoli/Martin}
M.~{Sassoli de Bianchi} and P.~A. Martin.
\newblock On the definition of time delay in scattering theory.
\newblock {\em Helv. Phys. Acta}, 65(8):1119--1126, 1992.

\bibitem{Smith60}
F.~T. Smith.
\newblock Lifetime matrix in collision theory.
\newblock {\em Phys. Rev.}, 118:349--356, 1960.

\bibitem{Tiedra06}
R.~{Tiedra de Aldecoa}.
\newblock Time delay and short-range scattering in quantum waveguides.
\newblock {\em Ann. Henri Poincar\'e}, 7(1):105--124, 2006.

\bibitem{Wan87}
X.~P. Wang.
\newblock Time-delay operator for a class of singular potentials.
\newblock {\em Helv. Phys. Acta}, 60(4):501--509, 1987.

\bibitem{Wan88}
X.~P. Wang.
\newblock Phase-space description of time-delay in scattering theory.
\newblock {\em Comm. Partial Differential Equations}, 13(2):223--259, 1988.

\end{thebibliography}

\end{document}